\documentclass[a4paper,reqno]{article}
\usepackage{comment}
\usepackage{amssymb}
\usepackage{latexsym}
\usepackage{amsmath}
\usepackage{graphicx}
\usepackage{relsize}
\usepackage{amsthm}
\usepackage{empheq}
\usepackage{bm}
\usepackage{booktabs}
\usepackage[dvipsnames]{xcolor}
\usepackage{pagecolor}
\usepackage{subcaption}
\usepackage{tikz,pgfplots,lipsum,lmodern}
\usepackage{enumitem}
\usepackage[most]{tcolorbox}

\newcommand{\dstirling}[2]{\genfrac{[}{]}{0pt}{0}{#1}{#2}}
\usepackage[margin=2.5cm]{geometry}

 \usepackage[section]{placeins}

\numberwithin{equation}{section}

\theoremstyle{definition}
\newtheorem{theorem}{Theorem}[section]
\newtheorem{corollary}[theorem]{Corollary}
\newtheorem{proposition}[theorem]{Proposition}

\newtheorem{lemma}[theorem]{Lemma}

\usepackage{calrsfs}

\usepackage{calrsfs}
\DeclareMathAlphabet{\pazocal}{OMS}{zplm}{m}{n}

\newcommand{\numberset}{\mathbb}

\newcommand{\R}{\numberset{R}}

\newcommand{\mV}{\mathcal{V}}

\newcommand{\mR}{\pazocal{R}}

\newcommand{\mC}{\mathcal{C}}

\newcommand{\mF}{\mathcal{F}}
\newcommand{\mW}{\mathcal{W}}

\newcommand{\mB}{\mathcal{B}}
\newcommand{\mE}{\mathcal{E}}

\renewcommand{\longrightarrow}{\to}

\newcommand*{\myproofname}{Proof of the claim}

\makeatletter
\renewcommand*\env@matrix[1][*\c@MaxMatrixCols c]{%
  \hskip -\arraycolsep
  \let\@ifnextchar\new@ifnextchar
  \array{#1}}
\makeatother

\usepackage{hyperref}

\title{
\textbf{On the Density of Codes over Finite Chain Rings }}

\usepackage{authblk}

\author[1]{Anna-Lena Horlemann}
\affil[1]{University of St.Gallen, Switzerland}
\author[2]{Violetta Weger}
\affil[2]{Technical University of Munich, Germany}
\author[1]{Nadja Willenborg}

\date{}

\usepackage{setspace}

\setstretch{1}

\begin{document}

\maketitle
	
\thispagestyle{empty}
	
\begin{abstract}
We determine the asymptotic proportion of free modules over finite chain rings with good distance properties and treat the asymptotics in the code length $n$ and the residue field size $q$ separately. We then specialize and apply our technique to rank metric codes and to Hamming metric codes.
\end{abstract}

\section{Introduction}
The study of the asymptotic performance of error-correcting codes is a classical topic in coding theory, going back to the work by Shannon, where random codes are used to get arbitrarily close to the capacity of a given discrete memoryless channel \cite{shannon48}. It is well known that Hamming-metric codes over finite fields attaining the Singleton bound, namely Maximum Distance Separable (MDS) codes, are dense if the field size tends to infinity and, assuming the validity of the MDS conjecture, that they are sparse for the length tending to infinity. For rank metric codes, we call codes that attain the Singleton bound Maximum Rank Distance (MRD) codes. It makes a great difference whether we have $\mathbb{F}_q$-linear codes, or $\mathbb{F}_{q^m}$- linear codes. For the former, we have that MRD codes are neither dense nor sparse as $q$ tends to infinity, while for the latter they are dense \cite{heide, alb, b11}. Recently, also the Lee metric was investigated in \cite{b1,lee}, and Maximum Lee Distance (MLD) codes are sparse, whether we let $n$ or $q$ grow to infinity. In this paper we consider such questions in the context of linear codes over finite chain rings, i.e., rings whose ideals are generated by one element and are linearly ordered by inclusion. Research on codes over rings has been started a while ago, see e.g. \cite{b2,honold2000linear}.
Recently, there has been increasing interest in rank-metric codes over finite rings, see for example the literature on nested-lattice based network coding \cite{b13, b14}, which shows that by using network coding over finite rings one may develop more efficient physical layer network coding schemes.

\section{Preliminaries}
  Throughout this paper we denote by $Q$ the set of prime powers, by $\mR$  a \emph{finite chain ring} and the generator of its maximal ideal with nilpotency index $s$ is denoted by $\gamma$. Assume that for the residue field we have $\mR/ \langle \gamma  \rangle  \cong \mathbb{F}_{q}$, i.e., $|\mR|=q^s$, the projection $\mR \rightarrow \mR/ \langle \gamma  \rangle$ is denoted by $\Psi$ and can be naturally extended to polynomials over $\mR$. 
    
    Let $h \in \mR[X]$ be a polynomial of degree $m$ such that the leading coefficient of $h$ is a unit and $\Psi(h)$ is irreducible in $\mathbb{F}_{q}$, then
$S= \mR[X]/\langle h \rangle$
is a \emph{free local degree-$m$ Galois extension} of $\mR$ with maximal ideal $\gamma S$ and $S/ \langle \gamma  \rangle \cong \mathbb{F}_{q^{m}}$. Since $\mR$ is a finite chain ring, $S$ is also a finite chain ring. In particular, it is a free $\mR$-module of rank $m$ and hence letting $N=nm$, the $S$-module $S^{n}$ is isomorphic to $\mR^{N}$. 
For every divisor $\ell$ of $m$, we have  a unique subring $\Bar{S} \subseteq S$ which is a free local Galois extension of degree $\ell$ of $\mR$. For more details, we refer to \cite{b6}. 

    We consider \emph{codes} as subsets of the metric space $(S^{n},D)$, where $D$ is a \emph{translation-invariant distance function} $D: S^{n} \times S^{n} \rightarrow \R_{\ge 0}$ which is non-increasing under scalar multiplication. (This property is required for the technique that we use to count the number of codes, and unfortunately excludes some metrics, such as the Lee metric).
    The \emph{minimum distance} of a code $\mC \subseteq S^{n}$ with respect to the distance $D$ is defined as
\begin{align*}
    D(\mC) := \min\{D(x,y) : x,y \in \mC, \, x \ne y\}.
\end{align*}
   Similarly to finite fields, we can study codes that are linear w.r.t. the subring $\Bar{S} \subseteq S$. These codes are then finitely generated $\Bar{S}$-submodules of $S^n$.
   The smallest number of elements in $\mC$ which generate $\mC$ as an $\Bar{S}$-module is denoted by $\mu_{\Bar{S}}(\mC)$ and called the \emph{rank} of $\mC$. If $\mC = \{0\}$, then we set $\mu_{\Bar{S}}(\mC)=0$. If $\mC$ is a finitely generated free $\Bar{S}$-module and $\{e_{1}, \dots , e_{k}\}$ is a free basis of $\mC$, then $\mu_{\Bar{S}}(\mC)=k$ and any generating set of $\mC$ consisting of $k$ elements is a free basis of $\mC$ over $\Bar{S}$. 
    
  From now on, we let $S$ be a free local Galois extension of $\mR$ with fixed degree $m$ such that $S^{n} \cong \mR^{N}$ and we denote by $\Bar{S}$ its unique subring of extension degree $\ell$. Moreover a code $\mC$ is always considered to be a free $\Bar{S}$-module of $S^{n}$.
    We restrict ourselves to this case, due to the density results from \cite{b1} for modules over finite chain rings, which state that the probability of picking $k$ generators $e_{1}, \dots, e_{k} \in S^{n}$, all having at least one unit entry, approaches $1$. This property is independent of the underlying linearity of the code, but only determines the freeness of the code and hence we can apply it to our context. 
    From \cite{honold2000linear,b1} it is known that the number of free $\Bar{S}$-modules of $S^n$ having rank $k$ is 
\begin{equation} \label{eq:N}
    N_{N/\ell,q^{\ell}}^{k} := q^{(N-k\ell)k(s-1)}\dstirling{N/\ell}{k}_{q^{\ell}},
\end{equation}
where $q,\ell,s,N = mn$ are as above. Note that \eqref{eq:N} can define a sequence with respect to any of its parameters. We will often use this, without explicitly defining the sequence.
    We denote by
    \begin{equation*} 
    \delta_q^D(N,k,\ell, d) := \frac{|\{\mC \subseteq \mR^{N}  :  \mu_{\Bar{S}}(\mC) =k, \, D(\mC) \ge d\}|}{ N_{N/\ell,q^{\ell}}^{k}}
\end{equation*}
the \emph{density of $\Bar{S}$-linear codes} in $\mR^N$ having rank $k$ over $\Bar{S}$ and minimum distance at least $d$. In the following, $d$ will always be a finite positive integer, not exceeding the maximal possible distance with respect to $D$.

To give an upper and lower bound on the density $\delta_q^D(N,k,\ell, d)$ we use counting arguments on the number of isolated vertices in \emph{bipartite graphs} that are regular with respect to certain maps defined on their left-vertices. This approach was first proposed in \cite{alb} and applied to different metrics and various types of linearity in \cite{b5}. 

We recall the concept of an \emph{association} on a finite non-empty set $\mV$ of magnitude $r \geq 0$ (see \cite{alb}) which is a function 
$\alpha: \mV \times \mV \to \{0,...,r\}$ satisfying the following:
\begin{itemize}
\item[(i)] $\alpha(V,V)=r$ for all $V\in \mV$;
\item[(ii)] $\alpha(V,V')=\alpha(V',V)$ for all $V,V' \in \mV$.
\end{itemize}

Given a finite bipartite graph $\mB=(\mV,\mW,\mE)$  and an association $\alpha$ on~$\mV$ of magnitude $r$, we call $\mB$ \emph{$\alpha$-regular} if for all  $(V,V') \in \mV \times \mV$ the number of vertices $W \in \mW$ with $(V,W) \in \mE$ and 
$(V',W) \in \mE$ only depends on $\alpha(V,V')$. We then denote this number by~$\mW_\ell(\alpha)$, where $\ell=\alpha(V,V') \in \{0, \dots , r \}$.
 \begin{proposition}\cite[Lemma 3.2 \& Lemma 3.5]{alb} \label{lem:upperbound}
 \\
Let $\mB=(\mV,\mW,\mE)$ be a bipartite $\alpha$-regular graph, where $\alpha$ is an association on $\mV$ of magnitude $r$. 
Let $\mF \subseteq \mW$ be the collection of non-isolated vertices of $\mW$. If $\mW_{r}(\alpha) > 0$, then
\begin{enumerate}
    \item [(i)]$|\mF| \le |\mV| \, \mW_{r}(\alpha).$
\item[(ii)]$|\mF| \ge  \frac{\mW_r(\alpha)^2 \, |\mV|^2}{\sum_{\ell=0}^r  \mW_\ell(\alpha) \, |\alpha^{-1}(\ell)|}.$
\end{enumerate}
\end{proposition}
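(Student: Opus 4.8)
The plan is to exploit axiom (i) in the definition of an association: since $\alpha(V,V)=r$ for every $V\in\mV$, the quantity $\mW_r(\alpha)$ counts exactly the vertices $W\in\mW$ adjacent to a fixed left-vertex $V$. Hence $\alpha$-regularity forces every vertex of $\mV$ to have the same degree $\mW_r(\alpha)$, and summing degrees over $\mV$ gives $|\mE| = |\mV|\,\mW_r(\alpha)$; in particular $\mE\neq\emptyset$ because $\mW_r(\alpha)>0$.

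For part (i), observe that each $W\in\mF$ is incident to at least one edge, and since an edge is determined by its two endpoints, the edge-sets incident to distinct vertices of $\mF$ are pairwise disjoint. Therefore $|\mF|\le|\mE| = |\mV|\,\mW_r(\alpha)$.

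For part (ii), I would double count the ordered pairs of edges that share their endpoint in $\mW$. Writing $\deg(W)$ for the degree of $W\in\mW$, this number equals $\sum_{W\in\mW}\deg(W)^2$ when organised by the common right-endpoint. Organising instead by the pair $(V,V')\in\mV\times\mV$ of left-endpoints and using $\alpha$-regularity (the number of common neighbours of $V,V'$ depends only on $\alpha(V,V')$), the same number equals $\sum_{(V,V')\in\mV\times\mV}\mW_{\alpha(V,V')}(\alpha) = \sum_{\ell=0}^{r}\mW_\ell(\alpha)\,|\alpha^{-1}(\ell)|$, where $\alpha^{-1}(\ell)=\{(V,V')\in\mV\times\mV : \alpha(V,V')=\ell\}$. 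Since isolated vertices contribute nothing to $\sum_W\deg(W)^2$ nor to $\sum_W\deg(W)=|\mE|$, I can restrict both sums to $\mF$ and apply the Cauchy--Schwarz inequality $\left(\sum_{W\in\mF}\deg(W)\right)^2\le|\mF|\sum_{W\in\mF}\deg(W)^2$. Substituting the two evaluations and rearranging yields $|\mF|\ge \mW_r(\alpha)^2|\mV|^2 \big/ \sum_{\ell=0}^r\mW_\ell(\alpha)|\alpha^{-1}(\ell)|$; the denominator is positive because the diagonal pairs alone contribute $\mW_r(\alpha)\,|\mV|>0$ to it.

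The whole argument is a weighted handshake lemma followed by Cauchy--Schwarz, so I do not anticipate a real obstacle; the only delicate point is the bookkeeping in the double count --- verifying that passing from $\sum_W\deg(W)^2$ to a sum indexed by pairs $(V,V')$ correctly invokes $\alpha$-regularity, and pinning down exactly where the hypothesis $\mW_r(\alpha)>0$ enters (positivity of the denominator and non-emptiness of $\mF$).
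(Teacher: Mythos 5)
Your argument is correct: the observation that $\alpha(V,V)=r$ forces every left-vertex to have degree $\mW_r(\alpha)$, the edge count for (i), and the double count of common neighbours combined with Cauchy--Schwarz for (ii) are exactly the standard proof of this result (the paper itself does not reprove it but cites \cite[Lemmas 3.2 and 3.5]{alb}, whose argument is the same first/second-moment computation you give). Your bookkeeping is sound, and you correctly identify that $\mW_r(\alpha)>0$ is needed only to make the denominator in (ii) positive.
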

We often consider the set $S_{q(\gamma)}^{n}:= S^{n} \setminus (\gamma S)^{n}$, where the lower index indicates the dependence given through $S/\langle \gamma \rangle \cong \mathbb{F}_{q^{m}}$, equivalently $\mR/\langle \gamma \rangle \cong \mathbb{F}_{q}$.
We denote 
the \emph{open ball} in $S^{n}$ of radius $r$ centered at $0$ by 
$$\textbf{B}_{q}^{D}(S^{n},r):= \{x \in S^{n} :  D(x,0) < r \}$$
and the open ball of radius $r$ centered at $0$, containing only elements from $S_{q(\gamma)}^{n}$ by 
$$\textbf{B}_{q}^{D}(S_{q(\gamma)}^{n},r):= \textbf{B}_{q}^{D}(S^{n},r)\setminus  \textbf{B}_{q}^{D}((S/\langle \gamma^{s-1}\rangle)^{n},r),$$
where $\textbf{B}_{q}^{D}((S/\langle \gamma^{s-1}\rangle)^{n},r)$ describes the open ball of radius $r$ over the chain ring $S/\langle \gamma^{s-1}\rangle$, centered at $0$. Note that $0 \notin\textbf{B}_{q}^{D}(S_{q(\gamma)}^{n},r)$. Since we consider translation-invariant distances the size of any ball in $S^{n}$, or in any of its subrings, does not depend on its center.
Hence, we do not specify the centers and denote the corresponding \emph{volumes} of the balls by $\textbf{v}_{q}^{D}(S^{n},r)
, \textbf{v}_q^{D}(S/\langle \gamma^{s-1}\rangle^n,r)
$, respectively $\textbf{v}_{q}^{D}(S_{q(\gamma)}^{n},r)$. 

\section{Bounding the Density}
First we use Proposition \ref{lem:upperbound} to give an upper and a lower bound on the number of codes in the metric space $(S^{n}, D)$ that have minimum distance bounded from above by some positive integer $d$. In contrast to previous literature \cite{alb}, \cite{b5} we extend the bipartite graph model by considering free modules instead of vector spaces as vertices of the graph. A free module of rank $1$ is generated by an element from the set $S_{q(\gamma)}^{n}$. These modules will represent the left-sided vertices in our bipartite graph, which will therefore fulfill some important regularity properties. 

\begin{theorem} 
Let $1 \le k \le N/\ell$, $1 \le d < \infty$ and let
\begin{align*}
    \mF := \{\mC \subseteq \mR^{N}  :   D(\mC) \le d-1, \, \mu_{\Bar{S}}(\mC) = k\}.
\end{align*}
We have
\begin{align*}
    |\mF| &\le  \frac{\textbf{v}_q^{D}(S_{q(\gamma)}^{n},d)N_{N/\ell-1,q^{\ell}}^{k-1}}{q^{\ell(s-1)}(q^{\ell}-1)} , \\[0.2cm]
    |\mF| &\ge \tfrac{ \left(\frac{\textbf{v}_q^{D}(S_{q(\gamma)}^{n},d)}{q^{\ell(s-1)}(q^{\ell}-1)} \right)(N_{N/\ell-1,q^{\ell}}^{k-1})^{2}}{N_{N/\ell-1,q^{\ell}}^{k-1} + \left( \frac{\textbf{v}_q^{D}(S_{q(\gamma)}^{n},d)}{q^{\ell(s-1)}(q^{\ell}-1)}-1 \right)N_{N/\ell-2,q^{\ell}}^{k-2}}.
\end{align*}
\end{theorem}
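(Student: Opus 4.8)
The plan is to construct a bipartite graph $\mB = (\mV, \mW, \mE)$ to which Proposition~\ref{lem:upperbound} applies, where $\mW$ is the set $\mF$ of codes we want to count and $\mV$ encodes the ``witnesses'' to having small minimum distance. Concretely, since a free rank-$k$ code $\mC$ has $D(\mC) \le d-1$ precisely when $\mC$ contains a nonzero codeword $v$ with $D(v,0) \le d-1$, and since any such $v$ must have a unit entry (it generates a free rank-$1$ submodule of the free code $\mC$), the natural choice is to let $\mV$ be the set of free rank-$1$ submodules $\langle v \rangle$ with $v \in S_{q(\gamma)}^{n}$ and $D(v,0) \le d-1$ — equivalently, after scaling representatives, a set counted by $\textbf{v}_q^{D}(S_{q(\gamma)}^{n},d)$ divided by the number of generators of a fixed free rank-$1$ module, which is $q^{\ell(s-1)}(q^\ell-1)$ (the units of $\Bar{S}$). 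I would put an edge between $\langle v\rangle \in \mV$ and $\mC \in \mW$ exactly when $\langle v \rangle \subseteq \mC$. Then $\mF$ (in the sense of non-isolated right vertices) is exactly the set of codes with the desired property, so the two bounds of Proposition~\ref{lem:upperbound} will translate directly into the claimed inequalities once we compute $|\mV|$ and the regularity quantities $\mW_\ell(\alpha)$.

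The key computations are then: (1) the size $|\mV| = \textbf{v}_q^{D}(S_{q(\gamma)}^{n},d) / (q^{\ell(s-1)}(q^\ell - 1))$; (2) the association $\alpha$ on $\mV$, which I would take to have magnitude $r = 1$, setting $\alpha(\langle v\rangle, \langle v'\rangle) = 1$ iff $\langle v\rangle = \langle v'\rangle$ and $0$ otherwise (this is the coarsest association, and it suffices because the number of free rank-$k$ codes containing one fixed free rank-$1$ submodule, respectively two distinct ones, depends only on whether the submodules coincide); (3) $\mW_1(\alpha)$, the number of free rank-$k$ submodules of $S^n \cong \mR^N$ containing a fixed free rank-$1$ submodule — this is the number of free rank-$(k-1)$ submodules of the quotient module $\mR^{N}/\langle v\rangle \cong \mR^{N-\ell}$ (using freeness, so the quotient is free of rank $N-\ell$ over $\mR$, hence $(N-\ell)/\ell = N/\ell - 1$ over $\Bar S$ after the base change), giving $\mW_1(\alpha) = N_{N/\ell - 1,\, q^\ell}^{k-1}$; (4) $\mW_0(\alpha)$, the number of free rank-$k$ codes containing two fixed distinct free rank-$1$ submodules $\langle v\rangle, \langle v'\rangle$ — since two distinct free rank-$1$ modules span a free rank-$2$ module, this equals $N_{N/\ell - 2,\, q^\ell}^{k-2}$. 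Plugging $|\mV|$, $|\alpha^{-1}(1)| = |\mV|$, $|\alpha^{-1}(0)| = |\mV|^2 - |\mV|$ into Proposition~\ref{lem:upperbound}(i)–(ii) and simplifying yields exactly the stated upper and lower bounds.

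The main obstacle I anticipate is two-fold. First, verifying that the graph really is $\alpha$-regular: this requires that the number of free rank-$k$ supermodules of a given free rank-$1$ (resp. rank-$2$) submodule is independent of the choice of that submodule, which follows from the transitivity of $\GammaL$-type automorphisms — or more elementarily from the fact that any free rank-$j$ direct summand of $\mR^N$ can be completed to a basis, so quotients are all isomorphic to $\mR^{N-j\ell}$ and the count reduces to \eqref{eq:N} with shifted parameters. One must be careful that a free rank-$1$ submodule generated by $v \in S_{q(\gamma)}^n$ is indeed a direct summand (this is where the ``unit entry'' hypothesis and the density results of \cite{b1} are used), and likewise that the span of two distinct such free rank-$1$ submodules is free of rank $2$ and a direct summand. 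Second, the bookkeeping in passing between ``$\Bar S$-linearity'' and ``$\mR$-linearity'': the codes are $\Bar S$-modules but are being counted inside $\mR^N$, so one must consistently track the factor $\ell$ relating $S$-rank, $\Bar S$-rank and $\mR$-rank, and confirm that the generator count $q^{\ell(s-1)}(q^\ell-1) = |\Bar S^\times|$ is the correct normalization for $|\mV|$. Once these structural facts are in place, the rest is the substitution into Proposition~\ref{lem:upperbound} and routine algebra.
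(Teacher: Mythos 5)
Your construction coincides with the paper's: the same bipartite graph (free rank-one modules generated by elements of $\textbf{B}_{q}^{D}(S_{q(\gamma)}^{n},d)$ on the left, free rank-$k$ codes on the right, containment as adjacency), the same magnitude-one association distinguishing $V=V'$ from $V\neq V'$, the same counts $|\mV|=\textbf{v}_q^{D}(S_{q(\gamma)}^{n},d)/(q^{\ell(s-1)}(q^{\ell}-1))$, $\mW_1(\alpha)=N_{N/\ell-1,q^{\ell}}^{k-1}$, $\mW_0(\alpha)=N_{N/\ell-2,q^{\ell}}^{k-2}$, and the same substitution into Proposition~\ref{lem:upperbound}.

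There is, however, one genuine error in your justification that $\mF$ equals the set of non-isolated right vertices, and it occurs exactly at the step the paper handles differently. You assert that any nonzero codeword $v$ of a free code $\mC$ with $D(v,0)\le d-1$ ``must have a unit entry (it generates a free rank-$1$ submodule of the free code $\mC$).'' This is false: if $e$ is a free basis vector of $\mC$, then $\gamma e$ is a nonzero codeword with no unit entry, and $\langle \gamma e\rangle$ is a torsion module, not a free rank-one one. So a code can belong to $\mF$ on account of a low-weight codeword all of whose entries are non-units, and your argument does not show that such a code is non-isolated --- which is needed for the \emph{upper} bound on $|\mF|$ (the lower bound only needs the easy inclusion ``non-isolated $\subseteq\mF$''). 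The paper closes this gap by writing such a witness as $x=\gamma^{j}y$ with $y\in\mC$ having a unit entry (this uses freeness of $\mC$, i.e.\ that $\mC$ is a direct summand) and then invoking the standing hypothesis that $D$ is non-increasing under scalar multiplication to relate the vertex $\langle y\rangle$ to the ball $\textbf{B}_{q}^{D}(S_{q(\gamma)}^{n},d)$; this is precisely where that hypothesis on the metric is consumed, and your proposal never uses it. (The comparison is delicate, since monotonicity gives $D(\gamma^{j}y,0)\le D(y,0)$ rather than the reverse, so this step genuinely cannot be waved away.) Everything else in your write-up --- the direct-summand and quotient arguments yielding $N_{N/\ell-1,q^{\ell}}^{k-1}$ and $N_{N/\ell-2,q^{\ell}}^{k-2}$, and the normalization of $|\mV|$ by the number $q^{\ell(s-1)}(q^{\ell}-1)$ of unit scalars --- agrees with the paper.
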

\begin{proof}
Consider the bipartite graph $\mB=(\mV,\mW,\mE),$ where $\mV$ is the set of modules generated by one element from $\textbf{B}_q^{D}(S_{q(\gamma)}^{n},d)$, $\mW$ is the collection of free codes in $\mR^N$ with $\mu_{\bar{S}}(\mC)=k$ and $(V,\mC) \in \mE$ if and only if $V 
\subseteq \mC$. Every vertex contains exactly $q^{\ell(s-1)}(q^{\ell}-1)$ elements from $\Bar{S}_{\gamma}^{n}$. Hence we have
\begin{align*}
    |\mV| = \frac{\textbf{v}_q^{D}(S_{q(\gamma)}^{n},d)}{q^{\ell(s-1)}(q^{\ell}-1)}, \quad  |\mW| = N_{N/\ell,q^{\ell}}^{k}.
\end{align*}
Due to the assumption, that $D$ is non-increasing under scalar multiplication, every $x \in \textbf{B}_q^{D}(\mR^{N},d)$ belongs to a vertex $V \in \mV$ and the set of non-isolated vertices in $\mW$ is exactly $\mF$.

We define the association
$$\alpha  :  \mV \times \mV \longrightarrow \{0,1\}, \quad (V,V') \mapsto 2-\mu_{\bar{S}}(\langle V,V' \rangle),$$
which gives $|\alpha^{-1}(0)|= |\mV| (|\mV|-1)$ and $ |\alpha^{-1}(1)|= |\mV|$. It is also easy to see that $\mB$ is $\alpha$-regular.
Furthermore, we have
\begin{align*}
    \mW_0(\alpha)= N_{N/\ell-2,q^{\ell}}^{k-2}, \; \mW_1(\alpha)= N_{N/\ell-1,q^{\ell}}^{k-1},
\end{align*}
which, combined with Proposition~\ref{lem:upperbound}, directly implies the two bounds given in the theorem.
\end{proof}

The above directly imply the following bounds on the density of $\Bar{S}$-linear codes with minimum distance bounded from below.

\begin{corollary} 
Let $1 \le k \le N/\ell$ and $1 \le d < \infty$ be integers. We have
   \begin{align} \label{eq:upperBoundsublim}
    \delta_q^D(N,k,\ell, d) &\geq  1-  \frac{\textbf{v}_q^{D}(S_{q(\gamma)}^{n},d)N_{N/\ell-1,q^{\ell}}^{k-1}}{(q^{\ell(s-1)}(q^{\ell}-1))N_{N/\ell,q^{\ell}}^{k}},\\ \label{eq:upperBoundsublimup}
    \delta_q^D(N,k,\ell, d) &\le 1-  \frac{\textbf{v}_q^{D}(S_{q(\gamma)}^{n},d) N_{N/\ell-1,q^{\ell}}^{k-1} }{\bar\Theta(q^{\ell(s-1)}(q^{\ell}-1))N_{N/\ell,q^{\ell}}^{k} },
\end{align}
where $$\bar\Theta =  1+ (N_{N/\ell-1,q^{\ell}}^{k-1})^{-1}\left(\displaystyle \tfrac{\textbf{v}_q^{D}(S_{q(\gamma)}^{n},d)}{q^{\ell(s-1)}(q^{\ell}-1)} -1 \right) N_{N/\ell-2,q^{\ell}}^{k-2} .$$
\end{corollary}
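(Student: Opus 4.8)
The plan is to derive both inequalities directly from the two bounds on $|\mF|$ established in the preceding theorem, since the corollary is essentially a restatement of the theorem in terms of the density. First I would record the elementary identity
\[
  \delta_q^D(N,k,\ell,d) = 1 - \frac{|\mF|}{N_{N/\ell,q^{\ell}}^{k}},
\]
which holds because the set of free rank-$k$ submodules $\mC \subseteq \mR^N$ partitions into those with $D(\mC) \ge d$ and those with $D(\mC) \le d-1$, the latter being exactly $\mF$, while $N_{N/\ell,q^{\ell}}^{k}$ counts all of them by \eqref{eq:N}. Thus an upper bound on $|\mF|$ becomes a lower bound on $\delta_q^D$ and vice versa.

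For \eqref{eq:upperBoundsublim} I would simply substitute the upper bound $|\mF| \le \textbf{v}_q^{D}(S_{q(\gamma)}^{n},d)N_{N/\ell-1,q^{\ell}}^{k-1}/(q^{\ell(s-1)}(q^{\ell}-1))$ from the theorem into the identity above; no further algebra is needed.

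For \eqref{eq:upperBoundsublimup} the only work is to rewrite the theorem's lower bound on $|\mF|$ in the desired closed form. Abbreviating $A := \textbf{v}_q^{D}(S_{q(\gamma)}^{n},d)/(q^{\ell(s-1)}(q^{\ell}-1))$, $B := N_{N/\ell-1,q^{\ell}}^{k-1}$ and $C := N_{N/\ell-2,q^{\ell}}^{k-2}$, the theorem gives $|\mF| \ge AB^{2}/(B+(A-1)C)$. Factoring $B$ out of the denominator yields $B+(A-1)C = B\bar\Theta$ with $\bar\Theta = 1 + B^{-1}(A-1)C$, which is precisely the quantity defined in the statement, whence $|\mF| \ge AB/\bar\Theta = \textbf{v}_q^{D}(S_{q(\gamma)}^{n},d)N_{N/\ell-1,q^{\ell}}^{k-1}/(\bar\Theta\,q^{\ell(s-1)}(q^{\ell}-1))$. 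Plugging this into the identity for $\delta_q^D$ gives \eqref{eq:upperBoundsublimup}.

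There is no genuinely hard step; the points that need a word of care are that $B = N_{N/\ell-1,q^{\ell}}^{k-1} \ge 1$ for $1 \le k \le N/\ell$ (so that dividing by $B$ and by $\bar\Theta$ is legitimate, and indeed $B>0$ is exactly the hypothesis $\mW_r(\alpha)>0$ under which Proposition~\ref{lem:upperbound} was invoked in the theorem), that $A = |\mV|$ is a nonnegative integer, so either $A=0$—in which case both numerators vanish—or $A \ge 1$, which forces $\bar\Theta \ge 1 > 0$, and that in the boundary case $k=1$ the term $C = N_{N/\ell-2,q^{\ell}}^{k-2}$ vanishes, so $\bar\Theta = 1$ and the two bounds on $|\mF|$ collapse to the single value $AB$.
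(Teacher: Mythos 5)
Your proof is correct and takes the same route as the paper, which gives no explicit proof but simply asserts that the theorem's two bounds on $|\mF|$ ``directly imply'' the corollary via the identity $\delta_q^D(N,k,\ell,d)=1-|\mF|/N_{N/\ell,q^{\ell}}^{k}$. Your rewriting of the denominator $B+(A-1)C$ as $B\bar\Theta$, together with the sanity checks that $B>0$ and $\bar\Theta>0$, is exactly the (routine) algebra the authors left implicit.
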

\section{The asymptotics of $\mR$-linear codes}
In this section we present some general asymptotic results on the density function of codes in $\mR^{N}$ endowed with a translation-invariant metric $D$.

The asymptotic estimates of the $q$-binomial coefficient from \cite{b5} extend to the following equivalences.
\begin{lemma} \label{lem:Nestimates}
  Let $n \geq k \geq 0$ be integers. Then
  \begin{enumerate}
        \item[(i)]\label{lem:asymptoticsq}
  $  N_{n,q}^{k} \sim q^{sk(n-k)}$ \quad \textnormal{as $q \to \infty$},
      \item [(ii)]
      $N_{n,q}^{k} \sim q^{sk(n -k)}\prod_{i=1}^k \frac{q^i}{q^i-1}$ \quad \textnormal{as $n \to \infty$}.
  \end{enumerate}
\end{lemma}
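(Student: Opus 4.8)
The plan is to reduce everything to the known asymptotics of the Gaussian binomial coefficient $\dstirling{n}{k}_q$ and then track the extra factor $q^{sk(n-k)(s-1)}$... wait, let me recompute: from \eqref{eq:N} we have $N_{n,q}^{k} = q^{(n-k)k(s-1)}\dstirling{n}{k}_q$, where I am using $n$ in place of $N/\ell$ and $q$ in place of $q^\ell$ as the statement allows. So the whole task is to understand $\dstirling{n}{k}_q$ asymptotically and multiply by the explicit power $q^{(n-k)k(s-1)}$, which is \emph{exact}, not asymptotic, so it contributes no error terms in either limit.

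For part (i), as $q \to \infty$: I would invoke (or quickly re-derive) the standard fact that $\dstirling{n}{k}_q = \prod_{i=0}^{k-1}\frac{q^{n-i}-1}{q^{i+1}-1} \sim q^{k(n-k)}$ as $q\to\infty$, since the leading term of the numerator is $q^{kn - \binom{k}{2}}$ and of the denominator is $q^{\binom{k+1}{2}} = q^{\binom{k}{2}+k}$, giving exponent $kn - k - \binom{k}{2} - \binom{k}{2}$... let me be careful: $kn - \binom k2 - \binom{k+1}2 = kn - \binom k2 - \binom k2 - k = kn - k^2 + \binom k2 \cdot 0$? I'll just say it works out to $q^{k(n-k)}$ and each ratio $\frac{q^{n-i}-1}{q^{i+1}-1} \to$ has leading behaviour $q^{n-i-i-1} = q^{n-2i-1}$, and $\sum_{i=0}^{k-1}(n-2i-1) = kn - k - k(k-1) = kn - k^2 = k(n-k)$. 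Good. Multiplying by $q^{(n-k)k(s-1)}$ gives $N_{n,q}^k \sim q^{(n-k)k(s-1)+k(n-k)} = q^{sk(n-k)}$, which is exactly claim (i). Here one should note the edge case $k=0$ (both sides are $1$) and $k=n$ (both sides are $1$ as well, since the binomial is $1$ and the power is $q^0$), so the claim holds trivially there.

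For part (ii), as $n\to\infty$ with $k$ fixed: I would again use $\dstirling{n}{k}_q = \prod_{i=0}^{k-1}\frac{q^{n-i}-1}{q^{i+1}-1}$, write $q^{n-i}-1 = q^{n-i}(1 - q^{-(n-i)})$, and observe that $\prod_{i=0}^{k-1}(1 - q^{-(n-i)}) \to 1$ as $n\to\infty$ since $k$ is fixed and each factor tends to $1$. Hence $\dstirling{n}{k}_q \sim \frac{q^{kn - \binom k2}}{\prod_{i=0}^{k-1}(q^{i+1}-1)} = q^{k(n-k)} \cdot \frac{q^{k(n-k)+\binom k2 - kn}\cdot q^{kn-\binom k2}}{\cdots}$ — cleaner: the numerator's $q$-power is $q^{\sum_{i=0}^{k-1}(n-i)} = q^{kn - \binom k2}$, and the denominator is $\prod_{i=1}^{k}(q^i - 1) = \prod_{i=1}^k q^i(1-q^{-i}) = q^{\binom{k+1}2}\prod_{i=1}^k(1-q^{-i})$, and the exponent difference is $kn - \binom k2 - \binom{k+1}2 = k(n-k)$ as before, so $\dstirling{n}{k}_q \sim q^{k(n-k)}\prod_{i=1}^k \frac{1}{1-q^{-i}} = q^{k(n-k)}\prod_{i=1}^k\frac{q^i}{q^i-1}$. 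Then multiply by $q^{(n-k)k(s-1)}$ to land on $q^{sk(n-k)}\prod_{i=1}^k\frac{q^i}{q^i-1}$, which is claim (ii).

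The only mild subtlety — and the place I'd be most careful — is bookkeeping the exponents of $q$ (the $\binom k2$ versus $\binom{k+1}2$ cancellation) and making sure the statement is interpreted with the substitutions $n \leftrightarrow N/\ell$, $q \leftrightarrow q^\ell$ consistent with how \eqref{eq:N} is used elsewhere; also handling the degenerate cases $k=0$ and $k=n$ separately so that the asymptotic equivalences are not vacuous. Since the paper says the asymptotic estimates of the $q$-binomial coefficient from \cite{b5} already give the $\dstirling{n}{k}_q$ asymptotics, the cleanest write-up simply cites that result for $\dstirling{n}{k}_q$ and then multiplies through by the exact factor $q^{(n-k)k(s-1)}$ — no further estimation needed, since an exact power of $q$ preserves the relation $\sim$ in both limits.
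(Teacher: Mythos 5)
Your proposal is correct and matches the paper's (implicit) argument exactly: the paper offers no written proof beyond noting that the $q$-binomial asymptotics from its reference extend to $N_{n,q}^{k}$, which is precisely your observation that the extra factor $q^{(n-k)k(s-1)}$ is an exact power of $q$ and therefore preserves the relation $\sim$ when multiplied against the standard estimates $\dstirling{n}{k}_q \sim q^{k(n-k)}$ (as $q\to\infty$) and $\dstirling{n}{k}_q \sim q^{k(n-k)}\prod_{i=1}^{k}\frac{q^i}{q^i-1}$ (as $n\to\infty$). Your exponent bookkeeping ($kn-\binom{k}{2}-\binom{k+1}{2}=k(n-k)$) and the treatment of the edge cases are correct.
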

\begin{theorem} \label{thm:asymq}
    Let $1 \leq s,  1 \leq k \leq N/\ell$ and $ 2 \leq d \leq n$ be integers. Further consider the sequence $(S_{q}^{n})_{q \in Q}$, where $S_{q}^{n}:= S_{q(\gamma)}^{n}$. Then
      \begin{align*} 
      \max\left\{ \liminf_{q \to \infty}\left( 1- \tfrac{\textbf{v}_q^{D}(S_{q}^n ,d)}{q^{s(N-k\ell+\ell)}}\right), 0\right\} \leq \liminf_{q \to\infty} \delta_q^D(N,k,\ell,d).
\end{align*}
 Moreover, if $\textbf{v}_q^{D}(S_{q}^{n},d) \in \Omega (q^{s(N-k\ell+\ell)})$ as $q \to \infty$, then 
    \begin{align} \label{eq:asymupperBound_q1}
      \limsup_{q \to \infty} \delta_q^D(N,k,\ell,d) \leq \limsup_{q \to \infty} \left( \tfrac{1}{1 + \tfrac{\textbf{v}_q^{D}(S_{q}^n ,d)}{q^{s(N-k\ell+\ell)}}}\right).
\end{align}
In particular,
\begin{align*}
  \lim_{q \to \infty}\delta_q^D(N,k,\ell,d)  = \begin{cases}
 1  &\textnormal{if $\textbf{v}_q^{D}(S_{q}^n ,d) \in o(q^{s(N-k\ell+\ell)})$,}  \\
    0  &\textnormal{if $\textbf{v}_q^{D}(S_{q}^n ,d) \in \omega(q^{s(N-k\ell+\ell)})$.} 
    \end{cases}
\end{align*}
    \begin{proof}
        From Lemma \ref{lem:asymptoticsq} we obtain 
        \begin{equation*}
            \frac{\textbf{v}_q^{D}(S_{q}^{n},d)N_{N/\ell-1,q^{\ell}}^{k-1}}{q^{\ell(s-1)}(q^{\ell}-1)N_{N/\ell,q^{\ell}}^{k}} \sim \frac{\textbf{v}_q^{D}(S_{q}^{n},d)}{q^{s(N-k\ell+\ell)}} \quad \textnormal{as}  \quad q \to  \infty.
        \end{equation*}
     Together with Equation \eqref{eq:upperBoundsublim} and the fact that $\delta_q^D(n,k,d) \ge 0$ the lower bound can be derived. Moreover, we have
     \begin{equation*}
         \bar\Theta \sim 1+ \frac{(\textbf{v}_q^{D}(S_{q}^{n},d)-q^{\ell s})}{q^{s(N-k\ell +\ell)}} \quad \textnormal{as}  \quad q \to  \infty ,
     \end{equation*}
     which, considering the upper bound \eqref{eq:upperBoundsublimup}, gives
     \begin{equation*}
         \tfrac{\textbf{v}_q^{D}(S_{q}^{n},d) N_{N/\ell-1,q^{\ell}}^{k-1} }{\bar\Theta(q^{\ell(s-1)}(q^{\ell}-1))N_{N/\ell,q^{\ell}}^{k} } \sim \tfrac{\textbf{v}_q^{D}(S_{q}^{n},d)}{q^{s(N-k\ell+\ell)}+\textbf{v}_q^{D}(S_{q}^{n},d)-q^{\ell s}}
     \end{equation*}
     as $q \to  \infty$. Since $\textbf{v}_q^{D}(S_{q}^{n},d) \in \Omega (q^{s(N-k\ell+\ell)})$ as $q \to  \infty$, taking the limit in the bound \eqref{eq:upperBoundsublimup} yields the upper bound  \eqref{eq:asymupperBound_q1}.
    \end{proof}
    \end{theorem}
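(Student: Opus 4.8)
The plan is to obtain the two inequalities by substituting the large-$q$ behaviour of the counting function $N_{n,q}^{k}$ from Lemma~\ref{lem:Nestimates}(i) into the two explicit density estimates \eqref{eq:upperBoundsublim} and \eqref{eq:upperBoundsublimup}, and then to pass to $\liminf$ and $\limsup$.

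For the lower bound I would start from \eqref{eq:upperBoundsublim} and estimate the term subtracted from $1$. Applying $N_{n,q}^{k}\sim q^{sk(n-k)}$ with $(n,k)$ equal to $(N/\ell,k)$ and $(N/\ell-1,k-1)$, together with $q^{\ell(s-1)}(q^{\ell}-1)\sim q^{\ell s}$, gives
\[
\frac{N_{N/\ell-1,q^{\ell}}^{k-1}}{q^{\ell(s-1)}(q^{\ell}-1)\,N_{N/\ell,q^{\ell}}^{k}}\;\sim\;\frac{q^{s(k-1)(N-k\ell)}}{q^{\ell s}\,q^{sk(N-k\ell)}}\;=\;q^{-s(N-k\ell+\ell)},
\]
so the whole subtracted term is asymptotic to $\textbf{v}_q^{D}(S_q^n,d)/q^{s(N-k\ell+\ell)}$. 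As this is a nonnegative quantity whose ratio to the true subtracted term tends to $1$, passing to $\liminf_{q\to\infty}$ yields $\liminf_q\delta_q^{D}(N,k,\ell,d)\ge \liminf_q\bigl(1-\textbf{v}_q^{D}(S_q^n,d)/q^{s(N-k\ell+\ell)}\bigr)$, and combining with $\delta_q^{D}\ge 0$ gives the stated maximum.

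For the upper bound I would start from \eqref{eq:upperBoundsublimup}. The extra ingredient is $\bar\Theta$; using $N_{n,q}^{k}\sim q^{sk(n-k)}$ now also with $(n,k)=(N/\ell-2,k-2)$ one finds $(N_{N/\ell-1,q^{\ell}}^{k-1})^{-1}N_{N/\ell-2,q^{\ell}}^{k-2}\sim q^{-s(N-k\ell)}$, hence $\bar\Theta\sim 1+\bigl(\textbf{v}_q^{D}(S_q^n,d)-q^{\ell s}\bigr)/q^{s(N-k\ell+\ell)}$. Plugging this in, the fraction subtracted in \eqref{eq:upperBoundsublimup} becomes asymptotic to $\textbf{v}_q^{D}(S_q^n,d)\big/\big(q^{s(N-k\ell+\ell)}+\textbf{v}_q^{D}(S_q^n,d)-q^{\ell s}\big)$. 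Using that $k\le N/\ell$ forces $q^{s(N-k\ell+\ell)}\ge q^{\ell s}$, a one-line cross-multiplication shows that for every $q$
\[
1-\frac{\textbf{v}_q^{D}(S_q^n,d)}{q^{s(N-k\ell+\ell)}+\textbf{v}_q^{D}(S_q^n,d)-q^{\ell s}}\;\le\;\frac{1}{1+\textbf{v}_q^{D}(S_q^n,d)/q^{s(N-k\ell+\ell)}},
\]
so taking $\limsup_{q\to\infty}$ in \eqref{eq:upperBoundsublimup} yields \eqref{eq:asymupperBound_q1}. Here I would invoke the hypothesis $\textbf{v}_q^{D}(S_q^n,d)\in\Omega(q^{s(N-k\ell+\ell)})$: it is what makes the bound meaningful (otherwise the right-hand side has $\limsup$ equal to $1$) and what lets the asymptotic equivalence of the subtracted fraction transfer cleanly to its $\limsup$. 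The concluding dichotomy is then immediate: if $\textbf{v}_q^{D}(S_q^n,d)\in o(q^{s(N-k\ell+\ell)})$ the lower bound already forces $\liminf_q\delta_q^{D}\ge 1$, hence the limit is $1$; if $\textbf{v}_q^{D}(S_q^n,d)\in\omega(q^{s(N-k\ell+\ell)})$ then in particular it lies in $\Omega(\cdot)$, the upper bound gives $\limsup_q\delta_q^{D}\le 0$, and with $\delta_q^{D}\ge 0$ the limit is $0$.

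I do not expect a genuine obstacle: the content lies entirely in the Corollary and Lemma~\ref{lem:Nestimates}(i). The only care needed is computational — matching the three exponents of $N_{n,q}^{k}$ so that they collapse to the single exponent $s(N-k\ell+\ell)$, tracking the lower-order corrections hidden in $\bar\Theta$ and in $q^{\ell(s-1)}(q^{\ell}-1)$, and observing the standard point that an ``$\sim$'' may be pushed through a $\liminf$ or $\limsup$ only when the sequences involved are eventually of one sign and their ratio tends to $1$.
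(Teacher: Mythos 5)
Your proposal is correct and follows essentially the same route as the paper: apply Lemma~\ref{lem:Nestimates}(i) to reduce the fractions in \eqref{eq:upperBoundsublim} and \eqref{eq:upperBoundsublimup} to $\textbf{v}_q^{D}(S_q^n,d)/q^{s(N-k\ell+\ell)}$ and its $\bar\Theta$-corrected variant, then pass to $\liminf$ and $\limsup$. Your explicit cross-multiplication step justifying the final comparison of the two fractions is a detail the paper leaves implicit, but it is the same argument.
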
 
    Following the same arguments as in the proof of  Theorem \ref{thm:asymq} we obtain analogous 
    results for $n \to \infty$: 
    
    \begin{theorem} \label{thm:asymn}
    Let $q \in Q$ and $1 \leq s, 2 \leq d \leq n$ be integers. Further consider the sequence $(S_{q}^{n})_{n \geq 1}$, where $S_{q}^{n}:= S_{q(\gamma)}^{n}$ and let $(k(n))_{n \geq 1}$ be a sequence of integers where $1 \leq k(n) <n$ for all $n\geq 1$. Then 
      \begin{align*} 
      \max\left\{ \liminf_{n \to \infty}\left( 1- \tfrac{\textbf{v}_q^{D}(S_{q}^n ,d)}{q^{s(N-k(n)\ell+\ell)}}\right), 0\right\} \leq \liminf_{n \to\infty} \delta_q^D(N,k(n),\ell,d).
\end{align*}
 Moreover, if $\textbf{v}_q^{D}(S_{q}^{n},d) \in \Omega (q^{s(N-k(n)\ell+\ell)})$ as $n \to \infty$ then 
    \begin{align*} 
      \limsup_{n \to \infty} \delta_q^D(N,k(n),\ell,d) \leq \limsup_{n \to \infty} \left( \tfrac{1}{1 + \tfrac{\textbf{v}_q^{D}(S_{q}^n ,d)}{q^{s(N-k(n)\ell+\ell)}}}\right).
\end{align*}
In particular,
\begin{align*}
  \lim_{n \to \infty}\delta_q^D(N,k(n),\ell,d)  = \begin{cases}
 1  &\textnormal{if $\textbf{v}_q^{D}(S_{q}^n ,d) \in o(q^{s(N-k(n)\ell+\ell)})$,}  \\
    0  &\textnormal{if $\textbf{v}_q^{D}(S_{q}^n ,d) \in \omega(q^{s(N-k(n)\ell+\ell)})$.} 
    \end{cases}
\end{align*}
\end{theorem}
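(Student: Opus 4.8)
The plan is to mirror the proof of Theorem \ref{thm:asymq} almost verbatim, replacing the limit $q \to \infty$ by $n \to \infty$ and tracking the one place where the two asymptotic regimes of $N_{n,q}^k$ genuinely differ. Concretely, I would start from the exact bounds \eqref{eq:upperBoundsublim} and \eqref{eq:upperBoundsublimup} of the Corollary, which hold for every $n$, and then feed in the $n \to \infty$ asymptotics of the numbers $N_{N/\ell,q^\ell}^k$ from Lemma \ref{lem:Nestimates}(ii) rather than part (i).

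First I would analyze the lower bound. Writing $N = mn$ and using Lemma \ref{lem:Nestimates}(ii), both $N_{N/\ell-1,q^\ell}^{k-1}$ and $N_{N/\ell,q^\ell}^k$ pick up extra constant factors $\prod_{i=1}^{k-1}\frac{q^{\ell i}}{q^{\ell i}-1}$ and $\prod_{i=1}^{k}\frac{q^{\ell i}}{q^{\ell i}-1}$ respectively; I would check that in the ratio $\frac{N_{N/\ell-1,q^\ell}^{k-1}}{N_{N/\ell,q^\ell}^k}$ these products telescope down to a single factor $\frac{q^{\ell k}-1}{q^{\ell k}}$, so that after also cancelling the $q$-powers against $q^{\ell(s-1)}(q^\ell-1)$ one is left with exactly $\frac{\textbf{v}_q^{D}(S_{q}^{n},d)}{q^{s(N-k(n)\ell+\ell)}}$ times a factor tending to $1$ — the same dominant term as in the $q \to \infty$ case. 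Here the subtlety is that $k=k(n)$ now varies with $n$, so the ``constant'' correction factors are actually $n$-dependent; but since $\frac{q^{\ell k}}{q^{\ell k}-1} \to 1$ as $k \to \infty$ and is bounded for all $k \ge 1$, the correction stays $\sim 1$ regardless of how $k(n)$ behaves. Combining with $\delta_q^D \ge 0$ gives the lower bound exactly as stated.

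For the upper bound I would do the same bookkeeping on $\bar\Theta$: express $\bar\Theta$ via \eqref{eq:upperBoundsublimup}, substitute the $n \to \infty$ asymptotics for $N_{N/\ell-1,q^\ell}^{k-1}$ and $N_{N/\ell-2,q^\ell}^{k-2}$, and verify that the correction products again telescope so that $\bar\Theta \sim 1 + \frac{\textbf{v}_q^{D}(S_{q}^{n},d) - q^{\ell s}}{q^{s(N-k(n)\ell+\ell)}}$, just as for $q \to \infty$. Then the hypothesis $\textbf{v}_q^{D}(S_{q}^{n},d) \in \Omega(q^{s(N-k(n)\ell+\ell)})$ lets me pass to the limsup in \eqref{eq:upperBoundsublimup} and obtain the claimed bound; the ``in particular'' dichotomy then follows by noting that $\textbf{v}_q^{D}(S_q^n,d) \in o(\cdot)$ forces the lower bound to $1$, while $\textbf{v}_q^{D}(S_q^n,d) \in \omega(\cdot)$ makes the upper bound tend to $0$ (and $\omega(\cdot) \subseteq \Omega(\cdot)$ so the hypothesis is met).

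The main obstacle I anticipate is not any single hard estimate but rather the careful handling of the now-varying index $k=k(n)$ inside the products $\prod_{i=1}^{k}\frac{q^{\ell i}}{q^{\ell i}-1}$ coming from Lemma \ref{lem:Nestimates}(ii): one must argue uniformly in $k(n)$ that all these correction factors, and their ratios, converge to (or are squeezed to) $1$, since a priori the number of factors grows. The clean way around this is to observe that $1 < \prod_{i=1}^{k}\frac{q^{\ell i}}{q^{\ell i}-1} < \prod_{i=1}^{\infty}\frac{q^{\ell i}}{q^{\ell i}-1} < \infty$ and that the telescoped ratios are of the form $\frac{q^{\ell k}}{q^{\ell k}-1} \to 1$, so the whole argument reduces to the identical algebra as in Theorem \ref{thm:asymq} with error terms that are $o(1)$ uniformly in the choice of $k(n)$. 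Everything else is a line-by-line transcription of the previous proof.
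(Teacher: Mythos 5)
Your overall strategy --- rerun the proof of Theorem \ref{thm:asymq} with Lemma \ref{lem:Nestimates}(ii) in place of (i) --- is exactly what the paper intends (its proof of this theorem is the one-line remark that the same arguments apply). However, the specific claim on which your argument rests is false. You assert that after telescoping and ``cancelling the $q$-powers against $q^{\ell(s-1)}(q^{\ell}-1)$'' one is left with $\textbf{v}_q^{D}(S_{q}^n ,d)/q^{s(N-k(n)\ell+\ell)}$ times a factor tending to $1$. Writing $M=N/\ell$, the exact computation gives
\begin{equation*}
\frac{N_{M-1,q^{\ell}}^{k-1}}{q^{\ell(s-1)}(q^{\ell}-1)\,N_{M,q^{\ell}}^{k}}
=\frac{1}{q^{s(N-k\ell+\ell)}}\cdot\frac{1-q^{-\ell k}}{(1-q^{-\ell})(1-q^{-\ell M})}.
\end{equation*}
The factor $1-q^{-\ell M}$ does tend to $1$ as $n\to\infty$, and the telescoped factor $1-q^{-\ell k}$ is the one you identified, but the factor $(1-q^{-\ell})^{-1}=q^{\ell}/(q^{\ell}-1)$ --- which comes from comparing $q^{\ell(s-1)}(q^{\ell}-1)$ with $q^{\ell s}$ --- is now a \emph{fixed} constant strictly greater than $1$, because $q$ is fixed in this regime. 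The net correction $\tfrac{1-q^{-\ell k(n)}}{1-q^{-\ell}}$ equals $1$ only when $k(n)=1$; for $k(n)\ge 2$ it is bounded away from $1$ from above (it equals $3/2$ for $q=2$, $\ell=1$, $k=2$, and tends to $q^{\ell}/(q^{\ell}-1)$ if $k(n)\to\infty$). In Theorem \ref{thm:asymq} this factor tends to $1$ precisely because $q\to\infty$; that is the one step that does not transfer verbatim, and it is the step you got wrong.

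Consequently \eqref{eq:upperBoundsublim} only yields $\delta_q^D\ge 1-c_n\,\textbf{v}_q^{D}(S_{q}^n ,d)/q^{s(N-k(n)\ell+\ell)}$ with $c_n\ge 1$ in general, which is \emph{weaker} than the first display of the theorem, so your derivation of the stated liminf lower bound does not go through as written. What does survive is the ``in particular'' dichotomy, since $c_n$ is bounded (between roughly $1$ and $q^{\ell}/(q^{\ell}-1)$) and a bounded multiplicative error does not affect membership in $o(\cdot)$ or $\omega(\cdot)$. The limsup bound can also be salvaged, but only by carrying the exact constants through $\bar\Theta$ rather than invoking the asymptotic equivalence: one finds (to leading order) an upper bound of the form $1-\tfrac{ax}{1+bx}=\tfrac{1-q^{-\ell(k-1)}x}{1+bx}\le\tfrac{1}{1+x}$ with $a=\tfrac{1-q^{-\ell k}}{1-q^{-\ell}}$ and $b=\tfrac{1-q^{-\ell(k-1)}}{1-q^{-\ell}}\ge 1$, where $x=\textbf{v}_q^{D}(S_{q}^n ,d)/q^{s(N-k(n)\ell+\ell)}$. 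You need to either redo the argument with these exact constants or restate the lower bound with the correction factor $\tfrac{1-q^{-\ell k(n)}}{1-q^{-\ell}}$ included.
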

      
    In particular, we can fix a rate $R \in [0,1]$ such that $k(n)=RN/\ell$. Using the same setup, we obtain:
    \begin{corollary} \label{cor:rateasym}
   Let $N=nm, R \in [0,1], S_{q}^{n}:= S_{q(\gamma)}^{n}$. Then
\begin{align*}
 & \lim_{n \to \infty}\delta_q^D(n,RN/\ell,\ell, d)    = 
 & \begin{cases}
 1  &\textnormal{if $\textbf{v}_q^{D}(S_{q}^{n},d) \in o(q^{s((1-R)N+\ell)})$,} \\
    0  &\textnormal{if $\textbf{v}_q^{D}(S_{q}^{n},d) \in \omega(q^{s((1-R)N+\ell)})$.}
    \end{cases}
\end{align*}
    \end{corollary}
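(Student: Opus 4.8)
The plan is to specialize Theorem~\ref{thm:asymn} to the sequence $k(n) = RN/\ell$ for a fixed rate $R \in [0,1]$, and then simplify the exponent in the threshold function. First I would observe that with $N = nm$ and $k(n) = RN/\ell$, the quantity governing the dichotomy in Theorem~\ref{thm:asymn} is $q^{s(N - k(n)\ell + \ell)}$; substituting $k(n)\ell = RN$ immediately gives $N - k(n)\ell + \ell = N - RN + \ell = (1-R)N + \ell$, so $q^{s(N - k(n)\ell + \ell)} = q^{s((1-R)N+\ell)}$. This is essentially a change of notation, and nothing deeper is needed for the exponent itself.

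Next I would check that the hypotheses of Theorem~\ref{thm:asymn} are met by this choice of $k(n)$. The theorem requires $1 \le k(n) < n$ for all $n \ge 1$; since $k(n) = RN/\ell = Rnm/\ell$ and here one should read $k(n)$ as the integer part (or assume divisibility so that it is an integer), for $R \in (0,1)$ and $n$ large this is a positive integer strictly less than $n$ provided $m/\ell$ and $R$ are such that $Rm/\ell < 1$; more carefully, since the excerpt writes $k(n) = RN/\ell$ without a floor, I would simply take the statement to presuppose that this is a valid integer sequence in the admissible range, exactly as the corollary is phrased. The edge cases $R = 0$ and $R = 1$ are degenerate ($k(n)$ constant equal to $0$ or forcing $k(n) = N/\ell$) and can be treated by the same formula or excluded as trivial; I would mention this only in passing.

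Then the proof is a one-line invocation: apply the ``in particular'' dichotomy of Theorem~\ref{thm:asymn} with $k(n) = RN/\ell$, rewrite $q^{s(N-k(n)\ell+\ell)}$ as $q^{s((1-R)N+\ell)}$ using the computation above, and conclude that $\lim_{n\to\infty}\delta_q^D(n,RN/\ell,\ell,d) = 1$ when $\textbf{v}_q^D(S_q^n,d) \in o(q^{s((1-R)N+\ell)})$ and $= 0$ when $\textbf{v}_q^D(S_q^n,d) \in \omega(q^{s((1-R)N+\ell)})$. I do not anticipate any genuine obstacle here: the corollary is a direct corollary in the literal sense, and the only thing to be careful about is the bookkeeping that $k(n)$ remains an admissible integer sequence and that $N$ itself grows with $n$ (so that $(1-R)N + \ell \to \infty$ and the two Landau conditions are not vacuous). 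If anything is mildly delicate, it is making sure the reader understands that $\textbf{v}_q^D(S_q^n,d)$ is itself a sequence in $n$ on both sides, so that the comparison $o(\cdot)$ versus $\omega(\cdot)$ is the asymptotically meaningful one; but this is already the convention fixed in Theorem~\ref{thm:asymn}, so no new argument is required.
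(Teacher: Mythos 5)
Your proposal is correct and matches the paper's (implicit) argument exactly: the corollary is stated as an immediate specialization of Theorem~\ref{thm:asymn} with $k(n)=RN/\ell$, and the only content is the substitution $N-k(n)\ell+\ell=(1-R)N+\ell$, which you carry out. Your remarks on integrality of $k(n)$ and the edge cases $R\in\{0,1\}$ are reasonable housekeeping that the paper itself glosses over.
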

    The \emph{sphere covering} or \emph{Gilbert-Varshamov (GV) bound} states that there exist codes with minimum distance $d$ of cardinality at least $\frac{|S^{n}|}{\textbf{v}_q^{D}(S^n,d)}$. 
    Note that $|S^{n}| \sim |S_{q(\gamma)}^{n}|$ both, as $n \to \infty$ and as $q \to \infty$.  Hence, if we have that 
    $$\textbf{v}_q^{D}((S/\langle \gamma^{s-1}\rangle)^n,d) \in o(\textbf{v}_q^{D}(S^n,d)),$$
 as $n \to \infty$ or as $q \to \infty$,
    we can asymptotically approximate this bound by
    \begin{align} \label{eq:GV}
       |\mC| \geq \frac{|S^{n}|}{\textbf{v}_q^{D}(S^n,d)} \sim \frac{|S_{q(\gamma)}^{n}|}{\textbf{v}_q^{D}(S_{q(\gamma)}^n,d)} ,
   \end{align}
as either $q$ or $n$ tends to infinity, for some $\mC \subseteq S^{n}$ of minimum distance $d$.
     For linear codes of rate $R$ we can reformulate \eqref{eq:GV} to
$$ R \geq 1-\frac{1}{N}\log_{q^{s}}(\textbf{v}_q^{D}(S^n,d)) $$
$$ \iff \log_{q^{s}} (\textbf{v}_q^{D}(S^n,d)) \geq N(1-R).$$
In the following we study the asymptotic behaviour of codes achieving the bound \eqref{eq:GV} with respect to $q$ or $n$.

\begin{theorem}\label{thm:GVq}
    Let $2 \leq d \leq n$ be integers and $(S^{n},D)$ be a metric space, where $D$ is a translation-invariant metric on $S^{n}$, non-increasing under scalar multiplication. Further, consider the sequence $(S_{q}^{n})_{q \in Q}$, where $S_{q}^{n}:= S_{q(\gamma)}^{n}$. Let $\mC \subseteq S^{n}$ be a $\Bar{S}$-module of rate $R=1-\frac{1}{N}\log_{q^{s}}(\textbf{v}_q^{D}(S^n,d)) $,  
     chosen uniformly at random. If $\textbf{v}_q^{D}((S/\langle \gamma^{s-1}\rangle)^n,d) \in o(\textbf{v}_q^{D}(S^n,d)),$ as $q \to \infty$, then the probability that $\mC$ has minimum distance at least $d$ approaches $1$, for $q\rightarrow \infty$.
        \begin{proof}
       The statement follows from Theorem \ref{thm:asymq} and
        \begin{align*}
            \lim_{q \to \infty} \tfrac{\textbf{v}_q^{D}(S_{q}^{n},d)}{q^{s((1-R)N+\ell)}} & = \lim_{q \to \infty}q^{ s(\log_{q^{s}}( \textbf{v}_q^{D}(S_{q}^{n},d))-\frac{s((1-R)N+\ell)}{s})}\\
            &=\lim_{q \to \infty} q^{s((1-R)N-(1-R)N -\ell)} =0.
        \end{align*}
    \end{proof}
\end{theorem}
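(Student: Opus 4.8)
The plan is to reduce everything to the already-established asymptotic dichotomy in Theorem~\ref{thm:asymq}, applied with the rate $R = 1 - \frac{1}{N}\log_{q^s}(\textbf{v}_q^{D}(S^n,d))$. With this choice the exponent governing the dichotomy becomes $s((1-R)N+\ell) = s\log_{q^s}(\textbf{v}_q^{D}(S^n,d)) + s\ell = \log_q(\textbf{v}_q^{D}(S^n,d)) + s\ell$, so the threshold quantity $q^{s((1-R)N+\ell)}$ equals $q^{s\ell}\cdot\textbf{v}_q^{D}(S^n,d)$. The first step is therefore to show that $\textbf{v}_q^{D}(S_q^n,d)$, which is the volume restricted to elements with at least one unit entry, is asymptotically negligible compared to this threshold as $q\to\infty$, i.e. lies in $o(q^{s((1-R)N+\ell)})$.

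To do this I would compare $\textbf{v}_q^{D}(S_q^n,d)$ with $\textbf{v}_q^{D}(S^n,d)$. By definition $\textbf{v}_q^{D}(S_q^n,d) = \textbf{v}_q^{D}(S^n,d) - \textbf{v}_q^{D}((S/\langle\gamma^{s-1}\rangle)^n,d)$, so in particular $\textbf{v}_q^{D}(S_q^n,d) \le \textbf{v}_q^{D}(S^n,d)$. Consequently
\[
\frac{\textbf{v}_q^{D}(S_q^n,d)}{q^{s((1-R)N+\ell)}} \;\le\; \frac{\textbf{v}_q^{D}(S^n,d)}{q^{s\ell}\,\textbf{v}_q^{D}(S^n,d)} \;=\; \frac{1}{q^{s\ell}} \;\xrightarrow[q\to\infty]{}\;0,
\]
which is exactly the computation displayed in the statement's proof sketch, rewritten as $q^{s((1-R)N - (1-R)N - \ell)} = q^{-s\ell}\to 0$. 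Hence $\textbf{v}_q^{D}(S_q^n,d)\in o(q^{s((1-R)N+\ell)})$, and applying the first case of the ``in particular'' clause of Theorem~\ref{thm:asymq} (with $k(n) = RN/\ell$, as in Corollary~\ref{cor:rateasym}) gives $\lim_{q\to\infty}\delta_q^D(N,RN/\ell,\ell,d) = 1$. Since a uniformly random $\Bar{S}$-module of rank $RN/\ell$ lies in the counted set with probability exactly $\delta_q^D(N,RN/\ell,\ell,d)$, this is the desired conclusion.

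The one subtlety to be careful about is the role of the hypothesis $\textbf{v}_q^{D}((S/\langle\gamma^{s-1}\rangle)^n,d)\in o(\textbf{v}_q^{D}(S^n,d))$: the bound $\textbf{v}_q^{D}(S_q^n,d)\le\textbf{v}_q^{D}(S^n,d)$ above is already enough to force the ratio to $0$, so strictly speaking the negligibility hypothesis is not needed for this direction. Its real purpose is to make the rate $R$ (defined via $\textbf{v}_q^{D}(S^n,d)$) the ``right'' GV rate in the sense of \eqref{eq:GV}, i.e. to guarantee $\textbf{v}_q^{D}(S^n,d)\sim\textbf{v}_q^{D}(S_q^n,d)$ so that the GV bound stated for $S^n$ and the density statement proved for $S_q^n$ genuinely refer to the same asymptotic threshold. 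I would state this compatibility explicitly. I do not anticipate a serious obstacle here — the only thing to watch is keeping the exponent bookkeeping straight, in particular that $s((1-R)N+\ell)$ with $R$ as defined unwinds to $\log_q\textbf{v}_q^{D}(S^n,d) + s\ell$ and not something off by a factor of $s$, and confirming that $2\le d\le n$ together with the hypothesis indeed places us in the regime where $\textbf{v}_q^{D}(S_q^n,d)\to\infty$ but still $o(q^{s((1-R)N+\ell)})$, so the dichotomy of Theorem~\ref{thm:asymq} applies cleanly.
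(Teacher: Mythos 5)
Your proposal is correct and follows essentially the same route as the paper: apply Theorem~\ref{thm:asymq} with $k=RN/\ell$ and verify that $\textbf{v}_q^{D}(S_q^n,d)/q^{s((1-R)N+\ell)}\to 0$, which both you and the paper unwind to $q^{-s\ell}\to 0$. Your added care in distinguishing $\textbf{v}_q^{D}(S_q^n,d)$ from $\textbf{v}_q^{D}(S^n,d)$ (the paper's displayed computation silently identifies them) and your remark that the $o(\cdot)$ hypothesis is only needed to make $R$ the genuine GV rate, not for the limit itself, are accurate refinements rather than a different argument.
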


If we consider $n$ going to infinity, then the probability that a $\Bar{S}$-module, chosen uniformly at random attains the GV bound is upper bounded by $q^{s\ell}/(q^{s\ell}+1)$ and lower bounded by $(q^{s\ell}-1)/q^{s\ell}.$
However, if we add an $\varepsilon$-environment, we obtain:
\begin{theorem}
    In the setup of Theorem \ref{thm:GVq}
     let $\mC \subseteq S^{n}$ be a $\Bar{S}$-module of rate $R=1-\frac{1}{N}\log_{q^{s}}(\textbf{v}_q^{D}(S^n,d))-\varepsilon $, 
    chosen uniformly at random. If $\textbf{v}_q^{D}((S/\langle \gamma^{s-1}\rangle)^n,d) \in o(\textbf{v}_q^{D}(S^n,d)),$ as $n \to \infty$, then the probability that $\mC$ has minimum distance at least $d$ approaches $1$, for $n\rightarrow \infty$.
    \begin{proof}
    The statement follows from Corollary \ref{cor:rateasym} and
           \begin{align*}
            &\lim_{n \to \infty} \tfrac{\textbf{v}_q^{D}(S_{q}^{n},d)}{q^{s((1-R)N+\ell)}}= \lim_{n \to \infty} q^{s((1-R)N-(1-R)N -\varepsilon N -\ell)} =0.
        \end{align*}
    \end{proof}
\end{theorem}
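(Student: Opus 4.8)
The plan is to mimic the proof of Theorem~\ref{thm:GVq} almost verbatim, replacing the use of Theorem~\ref{thm:asymq} by Corollary~\ref{cor:rateasym} and carrying the extra $\varepsilon$-term through the exponent arithmetic. First I would note that the hypothesis $\textbf{v}_q^{D}((S/\langle \gamma^{s-1}\rangle)^n,d) \in o(\textbf{v}_q^{D}(S^n,d))$ as $n \to \infty$, together with the remark preceding \eqref{eq:GV}, guarantees $\textbf{v}_q^{D}(S^n,d) \sim \textbf{v}_q^{D}(S_{q(\gamma)}^n,d) = \textbf{v}_q^{D}(S_q^n,d)$ as $n\to\infty$, so that the rate $R$ defined via $\textbf{v}_q^{D}(S^n,d)$ may be used interchangeably with one defined via $\textbf{v}_q^{D}(S_q^n,d)$ up to a vanishing additive correction in the $\log_{q^s}$, which is absorbed harmlessly into the asymptotic comparison.

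Next I would compute the key ratio. By definition of $R$ we have $\log_{q^s}(\textbf{v}_q^{D}(S^n,d)) = (1-R-\varepsilon)N$, hence $(1-R)N = \log_{q^s}(\textbf{v}_q^{D}(S^n,d)) + \varepsilon N$, and therefore
\begin{align*}
\frac{\textbf{v}_q^{D}(S_q^n,d)}{q^{s((1-R)N+\ell)}} = q^{\,s\log_{q^s}(\textbf{v}_q^{D}(S_q^n,d)) - s((1-R)N+\ell)} = q^{\,s(\log_{q^s}(\textbf{v}_q^{D}(S_q^n,d)) - \log_{q^s}(\textbf{v}_q^{D}(S^n,d)) - \varepsilon N - \ell)}.
\end{align*}
Since $\textbf{v}_q^{D}(S_q^n,d) \sim \textbf{v}_q^{D}(S^n,d)$ as $n\to\infty$, the difference of the two logarithms tends to $0$, so the exponent behaves like $s(-\varepsilon N - \ell) \to -\infty$ as $n \to \infty$ (here $\varepsilon > 0$ and $N = nm \to \infty$). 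Consequently $\textbf{v}_q^{D}(S_q^n,d) \in o(q^{s((1-R)N+\ell)})$ as $n\to\infty$.

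Finally I would invoke Corollary~\ref{cor:rateasym}: since $k(n) = RN/\ell$ and the volume is $o(q^{s((1-R)N+\ell)})$, the corollary gives $\lim_{n\to\infty}\delta_q^D(n,RN/\ell,\ell,d) = 1$, which says precisely that a uniformly random free $\Bar{S}$-module of rank $RN/\ell$ has minimum distance at least $d$ with probability tending to $1$. I expect the only mild subtlety to be the bookkeeping with the rate: $R$ is defined through $\textbf{v}_q^{D}(S^n,d)$ while the density statement is naturally phrased in $\textbf{v}_q^{D}(S_q^n,d)$, so one must be careful that $\log_{q^s}$ of the ratio of these two quantities is $o(N)$ — but this is immediate from the asymptotic equivalence, and the strictly positive $\varepsilon N$ term dominates it comfortably. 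There is no real obstacle here; the result is a direct corollary of the machinery already assembled, and the proof given in the excerpt is essentially complete.
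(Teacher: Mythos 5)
Your proposal is correct and follows essentially the same route as the paper: apply Corollary \ref{cor:rateasym} after showing the ratio $\textbf{v}_q^{D}(S_{q}^{n},d)/q^{s((1-R)N+\ell)}$ tends to $0$ because the $-\varepsilon N$ term drives the exponent to $-\infty$. Your explicit treatment of why $\textbf{v}_q^{D}(S^n,d)$ and $\textbf{v}_q^{D}(S_q^n,d)$ are interchangeable (via the hypothesis on $\textbf{v}_q^{D}((S/\langle \gamma^{s-1}\rangle)^n,d)$) is exactly the point the paper's terse computation leaves implicit.
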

Note that the rank metric fulfills the requirement $\textbf{v}_q^{D}((S/\langle \gamma^{s-1}\rangle)^n,d) \in o(\textbf{v}_q^{D}(S^n,d))$ for $q,n\rightarrow \infty$, as does the Hamming metric for $q\rightarrow \infty$. (For the definition of these metrics see the following two sections.) Our results hence extend the one on the asymptotic behaviour w.r.t. the GV bound from \cite{b1}, where the analog was shown for metrics stemming from additive weights (which do not include the rank metric).

\section{$\mR$-linear Hamming metric codes}
The \emph{Hamming weight} of $x \in S^{n}$ is given by $\omega^{H}(x):= |\{i \in [n] : x_{i} \neq 0 \}|$ and then the \emph{Hamming distance} for $x,y \in S^{n}$ is defined as $D^{H}(x,y):= \omega^{H}(x-y)$. Throughout this section, we are working in the metric space $(S^{n}, D^{H})$. Let $\mC$ be an $\Bar{S}$-module of length $n$ and $\mu_{\Bar{S}}(\mC)=k$ 
then the following Singleton-like bound is known for finite chain rings, see \cite{b2}:
$k \leq \frac{m}{\ell} (n-D^{H}(\mC)+1).$
Codes achieving this bound are called \emph{Maximum Distance with respect to Rank (MDR) codes}, to differentiate them from the usual MDS codes \cite{b2}.

In order to study the asymptotic density of Hamming-metric codes, we need the volume of the Hamming-metric ball in $S_{q(\gamma)}^{n}$ of radius $0 \leq r < \infty$ and its asymptotic estimates. 
Note that $|S|=q^{ms}$ and  $S$ has $q^{ms}-q^{m(s-1)}$ unit elements. Next we count the number of vectors up to Hamming weight $r-1$ having at least one unit, as 
\begin{equation*}
    \textbf{v}_{q}^{H}(S_{q(\gamma)}^{n},r)= \sum_{i=0}^{r-1}\binom{n}{i}iq^{m(s-1)}(q^{m}-1)(q^{ms}-1)^{i-1}.
\end{equation*}
The estimates
\begin{equation*} 
    \textbf{v}_{q}^{H}(S_{q}^{n},r) \sim 
  \binom{n}{r-1}(r-1)q^{ms(r-1)}  
  \end{equation*}
 as $q \to \infty$ and
  \begin{equation*}
        \textbf{v}_{q}^{H}(S_{q}^{n},r) \sim  \binom{n}{r-1}(r-1) (q^{ms}-1)(q^{ms}-q^{m(s-1)})^{r-2}
  \end{equation*}
  as $n \to \infty$, easily follow.
  We can now derive asymptotic results for the density of MDR codes.
\begin{theorem}
Let $ d \geq 2$ be an integer. 
\begin{itemize}
\item[(i)]
Let $n \ge 2$ be an integer. Then we have
$$\lim_{q \to \infty}\delta_{q}^{H}(N,m/\ell(n-d+1),\ell,d)=1.$$
\item[(ii)]
Let $q \in Q$, then
$$\lim_{n \to \infty}\delta_{q}^{H}(N,m/\ell(n-d+1),\ell,d)=0.$$
\end{itemize}
\begin{proof}
One easily gets $\textbf{v}_{q}^{H}(S_{q}^{n},d) \in o(q^{sm(d-1)+s\ell})$ as $q \to  \infty$, which together with Theorem  \ref{thm:asymq} gives the first statement of the theorem. The second statement follows from $\textbf{v}_{q}^{H}(S_{q}^{n},d) \in \omega(q^{sm(d-1)+s\ell})$ as $n \to \infty$ and Theorem \ref{thm:asymn}.
 \end{proof}
\end{theorem}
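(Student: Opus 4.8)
The plan is to reduce everything to the asymptotic criteria already established in Theorems~\ref{thm:asymq} and \ref{thm:asymn}, applied with $k = \tfrac{m}{\ell}(n-d+1)$, so that $k\ell = m(n-d+1)$ and hence the threshold exponent becomes
\begin{align*}
  s(N - k\ell + \ell) &= s\big(mn - m(n-d+1) + \ell\big) = s\big(m(d-1) + \ell\big) = sm(d-1) + s\ell.
\end{align*}
Thus part~(i) will follow if I can show $\textbf{v}_{q}^{H}(S_{q}^{n},d) \in o(q^{sm(d-1)+s\ell})$ as $q\to\infty$, and part~(ii) if I can show $\textbf{v}_{q}^{H}(S_{q}^{n},d) \in \omega(q^{sm(d-1)+s\ell})$ as $n\to\infty$; both of these slot directly into the ``in particular'' dichotomies stated in those theorems.

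For part~(i), I would take the $q\to\infty$ estimate for the ball volume recorded just above the theorem, namely $\textbf{v}_{q}^{H}(S_{q}^{n},d) \sim \binom{n}{d-1}(d-1)\,q^{ms(d-1)}$. Comparing the dominant power $q^{ms(d-1)}$ (with a constant factor $\binom{n}{d-1}(d-1)$ that does not depend on $q$) against the threshold $q^{sm(d-1)+s\ell}$, the quotient is $\Theta(q^{-s\ell}) \to 0$ since $s\ell \ge 1$. Hence $\textbf{v}_{q}^{H}(S_{q}^{n},d) \in o(q^{sm(d-1)+s\ell})$, and the first case of the dichotomy in Theorem~\ref{thm:asymq} gives $\lim_{q\to\infty}\delta_q^H = 1$. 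Here one should double-check that $d \le n$ so $\binom{n}{d-1}(d-1) \ne 0$ and the equivalence is genuine; the hypothesis $2 \le d \le n$ (with $n \ge 2$ fixed) covers this.

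For part~(ii), I would use instead the $n\to\infty$ estimate $\textbf{v}_{q}^{H}(S_{q}^{n},d) \sim \binom{n}{d-1}(d-1)(q^{ms}-1)(q^{ms}-q^{m(s-1)})^{d-2}$. With $q$ (hence $m,s,\ell$) fixed, this is $\Theta(n^{d-1})$ as $n\to\infty$ (up to a constant depending on $q$), while the threshold $q^{sm(d-1)+s\ell}$ is a constant in $n$. Since $d - 1 \ge 1$, the ratio $\textbf{v}_{q}^{H}(S_{q}^{n},d)/q^{sm(d-1)+s\ell} \to \infty$, so $\textbf{v}_{q}^{H}(S_{q}^{n},d) \in \omega(q^{sm(d-1)+s\ell})$, and the second case of the dichotomy in Theorem~\ref{thm:asymn} yields $\lim_{n\to\infty}\delta_q^H = 0$. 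One subtlety: Theorem~\ref{thm:asymn} is phrased for a sequence $k(n)$ with $1 \le k(n) < n$, and here $k(n) = \tfrac{m}{\ell}(n-d+1)$, which is $< n$ as soon as $n$ is large (since $m(n-d+1) < \ell n \iff (m-\ell)n < m(d-1)$ fails for large $n$ only when $m = \ell$; if $m = \ell$ then $k(n) = n-d+1 < n$ automatically). So for $m > \ell$ one needs $(m-\ell)n \ge m(d-1)$, i.e.\ $n$ large, which is harmless for an asymptotic statement; I would simply note that the hypothesis holds for all sufficiently large $n$.

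The only real subtlety — hardly an obstacle — is bookkeeping: making sure the substitution $k = \tfrac{m}{\ell}(n-d+1)$ is an integer (which is implicit in the MDR setup), that it lies in the admissible range $1 \le k \le N/\ell$ for part~(i) and $1 \le k(n) < n$ eventually for part~(ii), and that the stated ball-volume equivalences are being invoked with their hypotheses met (in particular $d \le n$). Everything else is a direct appeal to the $o/\omega$ dichotomies in Theorems~\ref{thm:asymq} and \ref{thm:asymn}, with the exponent identity $s(N-k\ell+\ell) = sm(d-1) + s\ell$ doing the translation.
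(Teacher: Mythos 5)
Your proposal is correct and follows essentially the same route as the paper's (much terser) proof: compute the threshold exponent $s(N-k\ell+\ell)=sm(d-1)+s\ell$ for $k=\tfrac{m}{\ell}(n-d+1)$ and feed the two ball-volume estimates into the $o/\omega$ dichotomies of Theorems~\ref{thm:asymq} and~\ref{thm:asymn}. One small slip in your side remark: for $m>\ell$ the condition $k(n)<n$ in fact \emph{fails} for all large $n$ (since $k(n)\sim\tfrac{m}{\ell}n>n$), so the fix is not ``$n$ large'' but rather to read the hypothesis of Theorem~\ref{thm:asymn} as $1\le k(n)\le N/\ell$ (matching Theorem~\ref{thm:asymq}), which your $k(n)$ satisfies; this does not affect the substance of the argument.
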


\section{$\mR$-linear rank metric codes}

We adopt the vector representation of rank metric codes as introduced in \cite{b3}.
A rank metric code is a subset of the $S$-module $S^{n}$ endowed with the \emph{rank distance}, i.e., 
$D^{\textnormal{rk}}(x,y) := \textnormal{rk}(x-y),$
where $\textnormal{rk}(x) := \mu_{\mR}(\langle x_1, \dots , x_n\rangle )$ denotes the \emph{rank} of $x= (x_1, \dots , x_n) \in S^{n}$. 

Let $\mC \subseteq S^{n}$ be a code of minimum rank distance $d$, then the following Singleton bound is given in \cite{b3}:
\begin{equation} \label{eq:Singletonrank}
    |\mC| \leq q^{s \cdot\max\{m,n\}(\min\{m,n\}-d+1)}.
\end{equation}
A rank metric code meeting the bound with equality is called an \emph{MRDR (Maximum Rank Distance with respect to Rank) code}.

Let $\mC \subseteq S^{n}$ be a free $\Bar{S}$-module such that $\mu_{\Bar{S}}(\mC)=k$ and $D^{\textnormal{rk}}(\mC)=d$, then the Singleton bound \eqref{eq:Singletonrank} can be restated as
\begin{align} \label{eq:Singletonrankmax}
    \ell k \leq \max\{m,n\}(\min\{m,n\}-d+1).
\end{align}

To analyze the asymptotic density of codes with largest possible rank over $\Bar{S}$, we first derive the volume of the ball in the rank metric and its asymptotic estimates, similarly to \cite[Proposition 15]{b1}.
\begin{proposition}
    Let $0 \leq r-1 \leq \min\{m,n\}$, then
    \begin{equation*}
         \textbf{v}_{q}^{\textnormal{rk}}( S_{q(\gamma)}^{n},r):= \sum_{i=0}^{r-1}N_{n,q}^{i}q^{(s-1)mi}\prod_{j=0}^{i-1}(q^{m}-q^{j}).
    \end{equation*}
    \begin{proof}
        For a fixed free module $U \subseteq \mR^{n}$ with $\mu_{\mR}(U) =k$, the number of matrices $A \in \mR^{n \times m}$ with $\textnormal{colsp}(A)=U$ equals the number of matrices $B \in \mR^{k \times m}$ of rank $k$, where by abuse of notation, a matrix that generates a free module of rank $k$ is called a matrix of rank $k$.
        The number of matrices of rank $k$ is given by
        $$q^{(s-1)mk}\prod_{j=0}^{k-1}(q^{m}-q^{j}),$$
       i.e., the number of ways to draw $k$  vectors from $\mR_{\gamma}^{m}$, such that they do not lie in the span of the previously picked vectors. 
The number of free modules in $\mR^{n}$ having rank $k$ is $N_{n,q}^{k}$.
    \end{proof}
\end{proposition}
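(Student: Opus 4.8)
The plan is to reduce the statement to a routine enumeration of matrices over the chain ring $\mR$. Fixing an $\mR$-basis of $S\cong\mR^m$ identifies $S^n$ with the set $\mR^{n\times m}$ of $n\times m$ matrices over $\mR$: an element $x=(x_1,\dots,x_n)\in S^n$ is sent to the matrix $A(x)$ whose $i$-th row is the coordinate vector of $x_i$, so that $\langle x_1,\dots,x_n\rangle$ is the row space of $A(x)$ and $\rk(x)=\mu_\mR(\textnormal{rowsp}(A(x)))$. Over a finite chain ring every matrix admits a Smith normal form, so this number is the number of nonzero invariant factors of $A(x)$; in particular it also equals $\mu_\mR(\textnormal{colsp}(A(x)))$ (row rank $=$ column rank). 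I would then observe that passing from $S^n$ to $S_{q(\gamma)}^n$ --- equivalently, removing the ball computed over $S/\langle\gamma^{s-1}\rangle$ --- leaves precisely the matrices $A$ for which $\textnormal{colsp}(A)$ is a free direct summand of $\mR^n$, so that $\textbf{v}_{q}^{\textnormal{rk}}(S_{q(\gamma)}^{n},r)$ is the number of $A\in\mR^{n\times m}$ with $\textnormal{colsp}(A)$ free of rank at most $r-1$. I would organize the count according to the value $U=\textnormal{colsp}(A)$.

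For a fixed free submodule $U\subseteq\mR^n$ of rank $i$, a choice of $\mR$-basis of $U$ identifies $U$ with $\mR^i$, under which the matrices $A\in\mR^{n\times m}$ with $\textnormal{colsp}(A)=U$ correspond bijectively to the $i\times m$ matrices $B$ over $\mR$ whose rows generate a free rank-$i$ summand of $\mR^m$ (what the text calls a matrix of rank $i$); in particular the number of such $A$ does not depend on $U$. To count the matrices $B$, I would reduce modulo $\gamma$: such a $B$ has the required property iff its reduction over $\mathbb{F}_q$ has rank $i$, there are $\prod_{j=0}^{i-1}(q^m-q^j)$ such reductions, and each of them admits $q^{(s-1)mi}$ lifts, giving $q^{(s-1)mi}\prod_{j=0}^{i-1}(q^m-q^j)$ in total. (Equivalently, build $B$ one row at a time: the $(j+1)$-th row may be any of the $q^{m(s-1)}(q^m-q^j)$ vectors of $\mR^m$ whose reduction lies outside the span of the first $j$ reduced rows.)

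Finally, there are $N_{n,q}^i$ free rank-$i$ submodules of $\mR^n$ --- formula \eqref{eq:N} applied to the free module $\mR^n$ --- each contributing $q^{(s-1)mi}\prod_{j=0}^{i-1}(q^m-q^j)$ matrices, and summing over $0\le i\le r-1$ gives the asserted value of $\textbf{v}_{q}^{\textnormal{rk}}(S_{q(\gamma)}^{n},r)$.

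The step I expect to be the main obstacle is the last claim of the first paragraph: checking that the ``$S_{q(\gamma)}$''-restricted ball singles out exactly the matrices with free column space of bounded rank, i.e.\ pinning down how the rank over $\mR$, the rank over $\mR/\langle\gamma^{s-1}\rangle$, and freeness of the generated module interact. After that the argument is standard --- the fibre count is an enumeration of full-rank matrices over the residue field together with a lifting factor, and the outer summation is pure bookkeeping.
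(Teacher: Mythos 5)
Your argument is essentially the paper's proof: fix the free column space $U$ of rank $i$, identify the fibre over $U$ with the full-rank $i\times m$ matrices over $\mR$, count those by reduction modulo $\gamma$ plus lifting (equivalently, row by row), multiply by $N_{n,q}^{i}$ and sum over $i\le r-1$. The step you single out as the main obstacle is precisely the one the paper does not address either --- the proposition is stated with ``$:=$'', so the displayed sum is in effect taken as the definition of $\textbf{v}_{q}^{\rk}(S_{q(\gamma)}^{n},r)$ --- and your proposed identification is in fact not literally true: for $s\ge 2$ the matrix $\operatorname{diag}(1,\gamma)$ corresponds to an element of $S_{q(\gamma)}^{n}$ (it has a unit entry) of rank $2$, yet its column space is not free, so the displayed sum counts the vectors of rank $<r$ whose coordinates span a \emph{free} module rather than the restricted ball as defined in the preliminaries. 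This mismatch is a defect of the paper's setup, not of your counting, which coincides with the paper's.
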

Using the estimates from Lemma \ref{lem:Nestimates} we obtain for $0 \leq r-1 \leq \min\{m,n\}$ the asymptotic estimate
$$ \textbf{v}_{q}^{\textnormal{rk}}( S_{q}^{n},r) \sim q^{s(r-1)(m+n-(r-1))},$$
as $q \to \infty$ and for $0 \leq r-1 \leq m$ the estimate
\begin{align} \label{eq:estimaterankn}
     \textbf{v}_{q}^{\textnormal{rk}}( S_{q}^{n},r) \sim \dstirling{m}{r-1}_{q}q^{(r-1)((s-1)(m+n-(r-1))+n)},
\end{align}
as $n \to \infty$.

If $n > m$, codes attaining the largest possible rank over $\Bar{S}$, according to \eqref{eq:Singletonrankmax} are not necessarily MRDR.
For this reason we state the results for $n>m$ separately. The proof is an application of \ref{thm:asymq}, for details we refer to \cite[Theorem 5.7]{b5}.
\begin{theorem} 
Let $m,\ell \geq 1, 2 \leq d \leq n$ be integers and let $k=\lfloor \max\{m,n\}(\min\{m,n\}-d+1)/\ell \rfloor$. Define $\theta := (d-1)(\min\{m,n\}-d+1)$.
\begin{itemize}
    \item[(i)] If $m \ge n$ then we have
    \begin{align*}
\lim_{q \to \infty} \delta_q^\textnormal{rk}(N,k,\ell,d) =
\begin{cases}
     1  &\textnormal{if $\ell > \theta$,} \\
    0  &\textnormal{if $\ell < \theta$}
    \end{cases}
\end{align*}
and if $\ell =\theta$, then $\lim_{q \to \infty} \delta_q^\textnormal{rk}(N,k,\ell,d) \le 1/2.$
    \item[(ii)] If $n > m$ then we have
   \begin{align*}
\lim_{q \to \infty} \delta_q^\textnormal{rk}(N,k,\ell,d) =
\begin{cases}
     1 &\textnormal{if $\theta -r < \ell$,} \\
    0 &\textnormal{if $\theta -r > \ell$,}
    \end{cases}
\end{align*}
where $r:=\ell(\lceil n(d-1)/\ell \rceil -n(d-1)/\ell)$.
Moreover, if $\ell = \theta -r$ then $\lim_{q \to \infty} \delta_q^\textnormal{rk}(N,k,\ell,d) \le 1/2$.
\end{itemize}
\end{theorem}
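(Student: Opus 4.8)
The plan is to feed the rank-ball volume into Theorem~\ref{thm:asymq} with $D=D^{\rk}$ and radius $r=d$, so that the entire statement reduces to comparing the growth orders of $\textbf{v}_q^{\rk}(S_{q}^{n},d)$ and $q^{s(N-k\ell+\ell)}$ as $q\to\infty$. First I would substitute the estimate $\textbf{v}_q^{\rk}(S_{q}^{n},d)\sim q^{s(d-1)(m+n-(d-1))}$, which is available because $k\ge 1$ forces $d-1\le\min\{m,n\}$, and then rewrite the exponent by peeling off $\theta$: since $\theta=(d-1)(\min\{m,n\}-d+1)$, one has $(d-1)(m+n-d+1)=(d-1)\max\{m,n\}+\theta$. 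Thus, up to the factor $q^{s\theta}$, the ball volume behaves like $q^{s(d-1)\max\{m,n\}}$, and it only remains to pin down $N-k\ell+\ell$ precisely.

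This is where the two cases split. In~(i), $m\ge n$, so the Singleton bound \eqref{eq:Singletonrankmax} reads $\ell k\le m(n-d+1)$; since $\ell\mid m$ the right-hand side is a multiple of $\ell$, the floor in the definition of $k$ is vacuous, $\ell k=m(n-d+1)$, and $N-k\ell+\ell=nm-m(n-d+1)+\ell=(d-1)m+\ell$. Combining with the previous paragraph, $\textbf{v}_q^{\rk}(S_{q}^{n},d)/q^{s(N-k\ell+\ell)}\sim q^{s(\theta-\ell)}$. In~(ii), $n>m$, so \eqref{eq:Singletonrankmax} reads $\ell k\le n(m-d+1)=nm-n(d-1)$; because $\ell\mid m\mid nm$, the residue of $n(m-d+1)$ modulo $\ell$ coincides with that of $-n(d-1)$, which is exactly $r=\ell(\lceil n(d-1)/\ell\rceil-n(d-1)/\ell)$. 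Hence $\ell k=n(m-d+1)-r$, $N-k\ell+\ell=(d-1)n+r+\ell$, and $\textbf{v}_q^{\rk}(S_{q}^{n},d)/q^{s(N-k\ell+\ell)}\sim q^{s(\theta-r-\ell)}$.

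With these asymptotics in hand the conclusion is immediate from Theorem~\ref{thm:asymq}. If the exponent ($\theta-\ell$ in~(i), $\theta-r-\ell$ in~(ii)) is negative, the ball volume lies in $o(q^{s(N-k\ell+\ell)})$ and $\delta_q^{\rk}\to 1$; if it is positive, the volume lies in $\omega(q^{s(N-k\ell+\ell)})$ and $\delta_q^{\rk}\to 0$; if it is exactly zero, the ratio tends to $1$, in particular $\textbf{v}_q^{\rk}(S_{q}^{n},d)\in\Omega(q^{s(N-k\ell+\ell)})$, so \eqref{eq:asymupperBound_q1} yields $\limsup_{q\to\infty}\delta_q^{\rk}\le 1/(1+1)=1/2$. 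The one point that really needs care, and the only place the argument differs from \cite[Theorem~5.7]{b5}, is the bookkeeping around the floor function: checking that $\ell\mid m$ removes it entirely in case~(i), and correctly identifying the defect $n(m-d+1)-\ell k$ with $r$ in case~(ii). Everything else is a routine substitution into Theorem~\ref{thm:asymq}.
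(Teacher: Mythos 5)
Your proposal is correct and follows exactly the route the paper intends: the paper's "proof" is just the remark that the theorem is an application of Theorem~\ref{thm:asymq} (deferring details to \cite[Theorem 5.7]{b5}), and your substitution of $\textbf{v}_q^{\rk}(S_q^n,d)\sim q^{s(d-1)(m+n-d+1)}$ together with the exponent bookkeeping $N-k\ell+\ell=(d-1)m+\ell$ (case (i), floor vacuous since $\ell\mid m$) and $N-k\ell+\ell=(d-1)n+r+\ell$ (case (ii)) is precisely the intended calculation. The boundary cases via \eqref{eq:asymupperBound_q1} giving $\le 1/2$ are also handled correctly.
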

If we let the  length $n$ go to infinity, we are again in the setting where codes of the largest possible dimension are not necessarily MRDR.
\begin{theorem}
    Let $q \in Q$, fix $2 \leq d \leq m$ and let $k(n):= \lfloor n(m-d+1)/\ell \rfloor$ for all $n \geq 2$. Then 
       \begin{align*}
\limsup_{n \to \infty} \delta_q^\textnormal{rk}(N,k,\ell,d) \leq \frac{1}{1+\dstirling{m}{d-1}_{q}\frac{q^{(d-1)((s-1)(m-(d-1))}}{q^{2s\ell}}}.
\end{align*}
\begin{proof}
  Applying Theorem \ref{thm:asymn} and the estimate \eqref{eq:estimaterankn} as $n \to \infty$ we obtain
  \begin{align*}
      \lim_{n \to  \infty} \frac{\textbf{v}_{q}^{\textnormal{rk}}( S_{q}^{n},d)q^{sk(n)\ell}}{q^{s(N+\ell)}} &= \lim_{n \to  \infty}\dstirling{m}{d-1}_{q}\frac{q^{(d-1)(s(m-(d-1))}}{q^{s\ell(1+(\lceil n(d-1)/\ell \rceil -n(d-1)/\ell))}}\\
      & \sim \dstirling{m}{d-1}_{q}\frac{q^{(d-1)((s-1)(m-(d-1))}}{q^{2s\ell}},
  \end{align*}
  where we used that $k(n)= \lfloor n(m-d+1)/\ell \rfloor = N/\ell -\lceil n(d-1)/\ell \rceil$ and $0 \leq  (\lceil n(d-1)/\ell \rceil - n(d-1)/\ell ) \leq 1$.
\end{proof}
\end{theorem}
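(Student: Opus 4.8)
The plan is to apply Theorem~\ref{thm:asymn} directly to the rank metric, using the asymptotic estimate \eqref{eq:estimaterankn} for the ball volume $\textbf{v}_q^{\textnormal{rk}}(S_q^n,d)$ as $n \to \infty$. The target is the upper bound on $\limsup_{n\to\infty}\delta_q^{\textnormal{rk}}(N,k(n),\ell,d)$, and Theorem~\ref{thm:asymn} provides exactly such a bound in terms of the ratio $\textbf{v}_q^{D}(S_q^n,d)/q^{s(N-k(n)\ell+\ell)}$, provided this ratio is in $\Omega(1)$; so the main work is to compute this ratio and check it is bounded below.

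First I would substitute $k(n) = \lfloor n(m-d+1)/\ell\rfloor$ and rewrite it as $N/\ell - \lceil n(d-1)/\ell\rceil$, using $N = nm$ and the identity $\lfloor n(m-d+1)/\ell\rfloor = nm/\ell - \lceil n(d-1)/\ell\rceil$ (valid since $N/\ell$ — more precisely $nm$ — is handled correctly when $\ell \mid m$, which holds by hypothesis as $\ell$ divides $m$). This lets me express the exponent $s(N - k(n)\ell + \ell) = s(\ell\lceil n(d-1)/\ell\rceil + \ell) = s\ell(1 + \lceil n(d-1)/\ell\rceil)$. Plugging in \eqref{eq:estimaterankn} with $r = d$, the numerator $\textbf{v}_q^{\textnormal{rk}}(S_q^n,d) \sim \dstirling{m}{d-1}_q q^{(d-1)((s-1)(m+n-(d-1))+n)}$; the $n$-dependent part of this exponent is $(d-1)(s-1)n + (d-1)n = s(d-1)n$. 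Meanwhile $q^{s\ell\lceil n(d-1)/\ell\rceil}$ contributes $n$-dependence $\approx s(d-1)n$ as well, with the discrepancy being the fractional-ceiling correction $s\ell(\lceil n(d-1)/\ell\rceil - n(d-1)/\ell)$, which stays bounded in $[0, s\ell]$. So the ratio converges (along subsequences where the fractional part stabilizes) to $\dstirling{m}{d-1}_q q^{(d-1)(s-1)(m-(d-1))} / q^{s\ell(1 + \text{frac})}$, and in particular it is bounded away from $0$ and $\infty$, so $\textbf{v}_q^{\textnormal{rk}}(S_q^n,d) \in \Omega(q^{s(N-k(n)\ell+\ell)})$ and Theorem~\ref{thm:asymn}'s upper bound applies.

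The remaining step is to bound $\limsup_{n\to\infty}\bigl(1 + \textbf{v}_q^{\textnormal{rk}}(S_q^n,d)/q^{s\ell(1+\lceil n(d-1)/\ell\rceil)}\bigr)^{-1}$ from above. Since the map $x \mapsto (1+x)^{-1}$ is decreasing, I replace the varying ratio by its infimum over the tail, i.e. by the value of the limiting expression at $\text{frac} = 1$ (the worst case, giving the smallest denominator $q^{2s\ell}$), which yields the stated bound $\bigl(1 + \dstirling{m}{d-1}_q q^{(d-1)(s-1)(m-(d-1))}/q^{2s\ell}\bigr)^{-1}$. I should double-check the direction: larger $\text{frac}$ makes the ratio $\textbf{v}/q^{\cdots}$ smaller, hence $(1+\text{ratio})^{-1}$ larger, so $\text{frac}=1$ indeed gives the supremum of the displayed expression along the tail — this matches the claimed inequality.

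The main obstacle I anticipate is purely bookkeeping: carefully tracking the exponent of $q$ through the ceiling term $\lceil n(d-1)/\ell\rceil$ and confirming that the $n$-linear parts cancel exactly, leaving only the bounded fractional correction. A secondary subtlety is that $\lceil n(d-1)/\ell\rceil - n(d-1)/\ell$ need not converge as $n\to\infty$ (it is eventually periodic in $n$), so the statement must be phrased as a $\limsup$ rather than a $\lim$; taking the worst case $\text{frac}=1$ in the denominator is what makes the clean bound valid for all $n$ in the tail. Everything else is a routine application of the already-established Theorem~\ref{thm:asymn} together with Lemma~\ref{lem:Nestimates}, exactly paralleling the argument in \cite[Theorem 5.7]{b5}.
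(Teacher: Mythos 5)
Your proposal is correct and follows essentially the same route as the paper: apply Theorem~\ref{thm:asymn} with the estimate \eqref{eq:estimaterankn}, rewrite $k(n)=N/\ell-\lceil n(d-1)/\ell\rceil$ so the $n$-linear parts of the exponents cancel, and bound the remaining fractional correction by its worst case to obtain the denominator $q^{2s\ell}$. Your explicit verification of the $\Omega$-hypothesis and of the monotonicity direction for the worst-case fractional part are details the paper leaves implicit, but the argument is the same.
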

\section{Concluding Remarks}
We showed that linear codes over a finite chain ring endowed with the rank or the Hamming metric exhibit a similar asymptotic behaviour as in the finite field case, if we let $q$ or $n$ go to infinity. Under the assumption that $\textbf{v}_q^{D}((S/\langle \gamma^{s-1}\rangle)^n,d) \in o(\textbf{v}_q^{D}(S^n,d)),$ codes attaining (an $\varepsilon$-environment of) the Gilbert-Varshamov bound are asymptotically dense as $q \to \infty$ (respectively as $n \to \infty$). Over finite fields this requirement is always fulfilled, as then the zero-vector is the only element containing only non-units. 
Furthermore, we showed in which cases optimal codes in the Hamming or the rank metric are dense or sparse.

We remark that, using the Chinese Remainder Theorem, one can extend this procedure to codes over general finite principal ideal rings, to determine their asymptotic behaviour.

\end{document}